\documentclass[12pt]{article}
\usepackage{}
\usepackage{CJK}
\usepackage[round]{natbib}
\usepackage{mathrsfs}
\usepackage{caption}
\usepackage{amsfonts}
\usepackage{tabularray}
\usepackage{float}
\usepackage{booktabs}
\usepackage{amsmath}
\usepackage{amssymb}
\usepackage{dsfont,footmisc}
\usepackage[all]{xy}
\usepackage{graphicx}
\usepackage{subfigure}
\usepackage{color}
\usepackage{subcaption}
\usepackage[singlespacing]{setspace}
\usepackage{multirow}
\usepackage{rotating}
\usepackage{comment}
\usepackage{tabularx}
\usepackage[font=small]{caption}%
\usepackage[colorlinks,linkcolor=red,anchorcolor=blue,citecolor=blue]%
{hyperref}
\usepackage{booktabs}    
\usepackage{multirow}    
\usepackage{threeparttable} 
\usepackage{array}
\usepackage{verbatim}
\def\d{\mathrm{d}}

\newcommand{\VaR}{\mathrm{VaR}}

\newcommand{\RV}{\mathcal{RV}}

\newcommand{\SES}{\mathrm{SES}}

\newcommand{\MES}{\mathrm{MES}}

\newcommand{\E}{\mathbb{E}}
\newcommand{\R}{\mathbb{R}}

\newcommand{\p}{\mathbb{P}}

\newcommand{\id}{\mathds{1}}

\renewcommand{\)}{\right)}

\newtheorem{theorem}{Theorem}[section]

\newtheorem{lemma}{Lemma}[section]

\newenvironment{proof}[1][Proof]{\noindent \textbf{#1.} }{\  \rule{0.5em}{0.5em}}
\newtheorem{proposition}{Proposition}[section]

\textheight=9.1truein \textwidth=6.6truein \topmargin -.3in
\oddsidemargin 0.1in \baselineskip=0.1in

\def\nn{\nonumber}

\begin{document}
\baselineskip=16pt
\title{Asymptotics of Systemic Risk in a Renewal Model with Multiple Business Lines and Heterogeneous Claims}

\author{ 
    Bingzhen Geng\thanks{\scriptsize  School of Big Data and Statistics, Anhui University, Hefei, Anhui 230601, China. Email: \texttt{gengbz@ahu.edu.cn}}
	\and
	Yang Liu\thanks{\scriptsize Corresponding Author. School of Science and Engineering, The Chinese University of Hong Kong, Shenzhen, Shenzhen, Guangdong 518172, China. Email: \texttt{yangliu16@cuhk.edu.cn}} 
	\and
	Hongfu Wan\thanks{\scriptsize School of Science and Engineering, The Chinese University of Hong Kong, Shenzhen, Shenzhen, Guangdong 518172, China. Email: \texttt{120090003@link.cuhk.edu.cn}}
}

\date{}
\maketitle
\begin{abstract}	
	
	
   Systemic risk is receiving increasing attention in the insurance industry. In this paper, we propose a multi-dimensional L\'{e}vy process-based renewal risk model with heterogeneous insurance claims, where every dimension indicates a business line of an insurer. We use the systemic expected shortfall (SES) and marginal expected shortfall (MES) defined with a Value-at-Risk (VaR) target level as the measurement of systemic risk. Assuming that all the claim sizes are pairwise asymptotically independent (PAI), we derive asymptotic formulas for the tail probabilities of discounted aggregate claims and the total loss, which hold uniformly for all time horizons. We further obtain the asymptotics of the above systemic risk measures. The main technical issues involve the treatment of uniform convergence in the dynamic time setting. Finally, we perform a detailed Monte Carlo study to validate our asymptotics and analyze the impact and sensitivity of key parameters in the asymptotic expressions both analytically and numerically.
 
    \quad \\[1pt]
    \noindent
	{\bf Keywords:} Systemic expected shortfall (SES), Marginal expected shortfall (MES), Dependence, Uniform convergence, Monte Carlo study
\end{abstract}

\section{Introduction}\label{sec:1}




 Systemic risk is becoming increasingly crucial in the insurance industry. A large insurance company often operates across multiple business lines associated with a variety of industries, and thus is often closely intertwined with financial markets through complex reinsurance networks, partnerships with private equity, and other investment arrangements (\cite{labini2024systemic}).  Additionally, extreme events such as earthquakes or wars can trigger intense short-term claims across multiple business lines, leading to a huge risk exposure if their capital is not properly allocated. 
Coupled with investment losses due to financial volatility, the resulting capital shortage in large insurance institutions can pose systemic risk to the whole economy; e.g., American International Group (AIG) in the 2008 global financial crisis. 

In recent years, both scholars and practitioners have developed various tools to measure systemic {risk} from different perspectives, including connectivity measures, 
illiquidity measures, 
probability distribution measures and so on (\cite{pampurini2024measuring}). 
Some tools related to probability distribution measures
include capital shortfall in \cite{acharya2012capital}, CoVaR in \cite{tobias2016covar}, scenario-based risk measures in \cite{wang2021scenario}, robust risk aggregation approaches in \cite{chen2022ordering} and \cite{BLLW25}, generalized risk measures in \cite{fadina2024framework}, and VaR-based and expectile-based systemic risk measures in \cite{geng2024value}. In this paper, we are going to investigate two specific probability distribution measures, systemic expected shortfall ($\SES$) and marginal expected shortfall ($\MES$), from an asymptotic aspect to study the capital shortage risk of an insurer 
given a systemic extreme scenario.

{
Specifically, $\SES$ and $\MES$ target at measuring an economic agent's 
capital shortfall when the whole system is undercapitalized (see \cite{acharya2017measuring} and \cite{chen2022asymptotic}). The feature of SES is that the risk faced by one agent is quantified by the expectation of the exceeding loss variable of an individual target level, conditional on the catastrophic event that the system loss exceeds a total target level; MES also has a similar feature. They are defined in various ways in the literature because different target levels are adopted. In this paper, we 
use the approach of setting both the individual and total target levels as Value-at-Risk (VaR) in \cite{asimit2011asymptotics} and \cite{jaune2022asymptotic}, and apply it to our definition of SES and MES.} Formally, for $1\le k\le d$, let $Z_{k}$ be the individual risk of the $k$-th economic agent, and let $S:=\sum_{i=1}^{d}Z_{i}$ be the aggregate risk. We have 
\begin{align*}
&\SES_{q,k}(S):=\E[\left(Z_{k}-\VaR_{q}\left(Z_{k}\right)\right)^{+}\mid S>\VaR_{q}\left(S\right)],\\
&\MES_{q,k}(S):=\E[Z_{k}\mid S>\VaR_{q}\left(S\right)],   
\end{align*}
where $$\VaR_{q}(X):=F_{X}^{\gets}(q)=\inf \{y\in\mathbb{R}:F_{X}(y)\ge q\},\, 0<q<1$$ is the quantile function of a random variable $X$ with distribution $F_{X}$. 

{In this paper, we construct a dynamic model, specifying the individual and aggregate {risk} by a $d$-dimensional L\'{e}vy process-based renewal model with heterogeneous claims. Consider a large insurance {company} consisting of $d$ business lines. The {risk} in each business line consists of insurance risk and financial risk, where the former arises from monetary losses due to insurance claims, and the reserve of premium income after deducting claim payouts is invested in the financial market, whose fluctuations then bring financial risk. We use $Z_{k,t}$ to denote the discounted loss in the time span $[0,t]$ of the $k$-th business line, expressed by a renewal {risk} model where the discount factor is a geometric L\'{e}vy process. Then, $D_t:=\sum_{k=1}^dZ_{k,t}$ is the discounted total loss. We assume that all the claim sizes are pairwise asymptotically independent (PAI), and are independent of all inter-arrival times. Additionally, we assume every business line has multiple claim types, as a single business line typically involves not only one type of claim, and the occurrence of one claim is often inevitably accompanied by a series of other heterogeneous claims. {We also assume the arrival time processes of these heterogeneous claims in each business line are arbitrarily dependent, so as to depict the temporal dependence of these claims.} For example, in a business line of traffic accident insurance, a car accident may lead to a property damage claim that subsequently generates a medical claim. Under capital shortage, such {temporal dependence} can increase the pressure on solvency.}

Then, as we mentioned above, $\SES$ and $\MES$ can be employed to assess the individual risk (particularly, the insurer's loss in a particular business line) under the condition of a systemic crisis. More precisely, we study systemic risk of each business line via
$$\SES_{q,k}(D_{t})=\E[\left(Z_{k,t}-\VaR_{q}\left(Z_{k,t}\right)\right)^{+}\mid D_{t}>\VaR_{q}\left(D_{t}\right)],$$
$$\MES_{q,k}(D_{t})=\E[Z_{k,t}\mid D_{t}>\VaR_{q}\left(D_{t}\right)],\;\, 1\le k\le d,\, 0<q<1,\, t\ge 0.$$
{It is important to note that here systemic risk focuses on the degree of capital shortage of a large insurance {company} under extreme tail events. We aim to provide asymptotic approximation formulas (namely, as $q$ approaches $1$) of $\SES_{q,k}(D_{t})$ and $\MES_{q,k}(D_{t})$ that can be directly calculated, to assist the manager to conduct appropriate risk assessment and capital pre-allocation.}

Actually, the renewal { risk} model has been studied in different aspects. For example, \cite{tang2007heavy} obtains the explicit asymptotic expression of the tail probability of discounted losses for a one-dimensional case, or more precisely, when the company has only one business line. Then, \cite{tang2010uniform} extend this result to a L\'{e}vy process-based case. \cite{li2012asymptotics} considers a time-dependent one-dimensional case. \cite{li2022asymptotic} considers a multi-dimensional case. As for the case that one business line contains two different kinds of claims, there are also many explorations. To the best of our knowledge, it is first taken into account by \cite{yuen2005ultimate}. Then in several works, such as \cite{li2013pairwise} and \cite{yang2019asymptotic}, the one-dimensional case has been discussed and they call this case ``by-claim" or ``delayed claim". Also note that most of the above literature focuses on the ruin probabilities of their renewal models, while \cite{li2022asymptotic} goes further to quantify a corresponding systemic risk measure.

Here, our paper considers a more general multi-dimensional L\'{e}vy process-based renewal model with multiple heterogeneous claims. Then, we quantify systemic risk of each business line with the assistance of two risk measures $\SES_{q,k}(D_{t})$ and $\MES_{q,k}(D_{t})$. 
{The main contributions and summary of this paper are as follows. 

(a) We develop a multi-dimensional renewal {risk} model to accommodate a more general setting involving multiple claim types per business line. We provide asymptotic expressions of the tail probability of the discounted total loss and  
those of the {$\SES_{q,k}(D_{t})$ and $\MES_{q,k}(D_{t})$} in the renewal model, which can be directly computed in practice. While \cite{li2022asymptotic} applies a systemic risk measure with linear target levels to assess the renewal model, our approach 
applies 
nonlinear $\VaR$ target levels to more accurately evaluate the risk.  

(b) We provide a technical extension on the issue of uniform convergence in the asymptotic analysis of {$\SES_{q,k}(D_{t})$ and $\MES_{q,k}(D_{t})$}. Unlike static models focusing on random variables with pre-assumed dependence structures (see, e.g., \cite{chen2022asymptotic}), our paper incorporates a dynamic setting into systemic risk measures. This requires further consideration on the uniformity of asymptotic properties over the whole time scale, which will be discussed in Section \ref{sec:5} in detail.  

(c) We conduct an in-depth analysis of the validity of the asymptotic approximations and the impact and sensitivity of parameter variations, and conclude that higher fluctuation parameters can lead to larger asymptotic values of SES and MES.
}

The rest of this paper is structured as follows. Section \ref{sec:2} provides some preliminaries on the notation and basic notions of regularly varying distributions, L\'{e}vy processes, and asymptotic independence. Section \ref{sec:3} presents the model setup and our main results. Section \ref{sec:4} conducts numerical studies to validate the accuracy of our results and analyze the impact and sensitivity of different parameters. Section \ref{sec:5} collects all the proofs.

\section{Preliminaries}\label{sec:2}
Throughout this paper, all limiting formulas either contain $x$ or contain $q$, and these two cases are mutually exclusive. In the following, all limit relationships refer to $x\rightarrow\infty$ if the formulas contain $x$, and refer to $q\uparrow 1$ if the formulas contain $q$, unless stated otherwise. For two positive functions $f(\cdot)$ and $g(\cdot)$, write $f(x)=O(g(x))$ if $\limsup f(x)/g(x)<\infty$; write $f(x)=o(g(x))$ if $\lim f(x)/g(x)=0$; write $f(x)\lesssim g(x)$ if $\limsup{f(x)}/{g(x)}\leq 1$; write $f(x)\gtrsim g(x)$ if $\liminf{f(x)}/{g(x)}\geq 1$; write $f(x)\sim g(x)$ if $\lim{f(x)}/{g(x)}=1$; we say $f$ and $g$ are weakly equivalent, denoted by $f(x)\asymp g(x)$, if both $f(x)=O(g(x))$ and $g(x)=O(f(x))$. Furthermore, for two positive bivariate functions $f(\cdot,\cdot)$ and $g(\cdot,\cdot)$, we say that the asymptotic relation $f(x,t)\sim g(x,t)$ holds uniformly for all $t$ in a nonempty set $\Delta$ if $$ \lim_{x\to \infty}\sup_{t\in\Delta} \Big|\frac{f(x,t)}{g(x,t)}-1\Big|=0.$$ Also, we say $f(x,t)\lesssim g(x,t)$ holds uniformly for all $t\in\Delta$ if $$\limsup_{x\to\infty}\sup_{t\in\Delta}\frac{f(x,t)}{g(x,t)}\le 1.$$ It is easy to see that $f(x,t)\sim g(x,t)$ uniformly for all $t\in\Delta$ if $f(x,t)\lesssim g(x,t)$ and $g(x,t)\lesssim f(x,t)$ both hold uniformly for all $t\in \Delta$. As usual, for a random variable $X$, write $X^+=\max\{X,0\}$. The indicator function of an event $A$ is denoted by $\id_{A}$. For any distribution function $F$, denote its tail by $\overline F(x)=1-F(x)$. For two real numbers $a$ and $b$, write $a\vee b=\max\{a,b\}$, and $a\wedge b=\min\{a,b\}$.

In this paper, a specific class of distributions is concerned. We say a distribution function $F$ has a regularly varying tail, if there is a fixed constant $0<\alpha<\infty$ such that for all $y>0$, $$\lim_{x\to\infty}\frac{\overline{F}(yx)}{\overline{F}(x)}= y^{-\alpha},$$ which is further denoted by $\overline{F}\in \RV_{-\alpha}$. By Theorem 1.5.6 of \cite{bingham1989regular}, if $\overline{F}\in \RV_{-\alpha}$, then for all $\varepsilon>0$ {and} all $b>1$, there is some $x_{0}>0$ such that Potter's bounds
\begin{align*}
\frac{1}{b}\left(\left(\frac{y}{x}\right)^{-\alpha-\varepsilon}\wedge\left(\frac{y}{x}\right)^{-\alpha+\varepsilon}\right)\le\frac{\overline{F}(y)}{\overline{F}(x)}\le b\left(\left(\frac{y}{x}\right)^{-\alpha-\varepsilon}\vee\left(\frac{y}{x}\right)^{-\alpha+\varepsilon}\right)    
\end{align*}
hold for any $x,y\ge x_{0}$. This implies for any $\beta>\alpha$,
\begin{align}
x^{-\beta}=o(\overline{F}(x)).\label{pre1}
\end{align}

In this paper, the L\'{e}vy process $\{R_{t}\}_{t\geq 0}$ is assumed to be right-continuous with a left limit, satisfying $\E\left[R_{1}\right]>0$. Then $\E\left[R_{t}\right]$ tends to $\infty$ as $t\to\infty$. The Laplace exponent is defined by 
\begin{align}
\phi(\alpha):=\log \E \left[e^{-\alpha R_{1}}\right]\label{pre3}
\end{align}
for $\alpha\in\mathbb{R}$. If $\phi(\alpha)$ is finite, then $\E\left[e^{-\alpha R_{t}}\right]=e^{t\phi(\alpha)}<\infty$; see \cite{tang2010uniform} for further acquaintance with the L\'{e}vy process and Laplace exponent.

Two non-negative and unbounded random variables $Z_{1}$ and $Z_{2}$ with distributions $F_{1}$ and $F_{2}$ are said to be asymptotically independent (abbreviated as AI) if $$\lim_{x\to\infty}\frac{\p\left(Z_{1}>U_{1}(x), Z_{2}>U_{2}(x)\right)}{\p\left(Z_{1}>U_{1}(x)\right)}=0$$ where $U_{i}(x):=\left(1/\overline{F_{i}}\right)^{\gets}(x),i=1,2$. According to Proposition 2.1 of \cite{li2022asymptotic}, if we have further $\overline{F_{i}}\in \RV_{-\alpha},i=1,2$, and $\overline{F_{1}}(x)\asymp\overline{F_{2}}(x)$, the asymptotic independence of $Z_{1}$ and $Z_{2}$ is equivalent to 
\begin{align}
\lim_{x\to\infty}\frac{\p\left(Z_{1}>x,Z_{2}>x\right)}{\p\left(Z_{1}>x\right)}=0.\label{pre2}  
\end{align}
Note that the asymptotic independence is one of practical dependence structures. For instance, two independent random variables are naturally asymptotically independent, and it can be depicted by certain copulas such as {Ali-Mikhail-Haq copula, Farlie-Gumbel-Morgenstern copula, Frank copula and so on; see \cite{li2010subexponential} and \cite{mcneil2015quantitative}.} Some useful general asymptotic properties of AI are discussed in \cite{chen2009sums}. Further, a series of random variables $Z_{1}, Z_{2},\ldots$ are said to be pairwise asymptotically independent (PAI) if each two of them are asymptotically independent.

\allowdisplaybreaks[1]
\section{Model setup and main results}\label{sec:3}

\subsection{Model setup}\label{sec:3.1}
{ Recall the setting explained in Section \ref{sec:1}. Especially, for the $k$-th business line ($1\leq k\leq d$), we define the surplus process (or insurance risk process as stated in \cite{kluppelberg2008integrated}) as a variation of the traditional Cram\'{e}r–Lundberg model:
\begin{align*}
U_{k,t}:=u_k+c_kt-\Psi_{k,t}:=u_k+c_kt-\sum_{j=1}^{r}\sum_{i=1}^{N_{k,t}^j}X_{ki}^j,
\end{align*}
where $u_k>0$ is the initial capital, $c_k>0$ is the premium rate and $\Psi_{k,t}$ is the claim payment process. Within $\Psi_{k,t}$, $r$ is the number of types of heterogeneous claims ($1\leq r< \infty$). 
The nonnegative random variables $\{X_{ki}^j;i\geq 1\}$ denote the sizes of the $j$-th  type of claims in the $k$-th business line and have a common distribution $F_k^j$. Let $\{\theta_{ki}^j;i\geq 1\}$ be the $j$-th type of claims' inter-arrival times from the $k$-th business line, and denote by $\tau_{ki}^j=\sum_{h=1}^i\theta_{kh}^j$ and 
$N_{k,t}^j=\sup\{m\geq 0:\tau_{km}^j\le t\}$  the corresponding claim
arrival times and counting processes. Assume $\tau_{k0}^j=0$ for convenience in notation. The insurer's discount factor is expressed as a geometric L\'{e}vy process $\{e^{-R_{t}}\}_{t\ge 0}$, where the investment return $\{R_{t}\}_{t\ge 0}$ is a L\'{e}vy process. In order to assess the risk generated by the $k$-th business line over a given period $[0,t]$ at its beginning, we consider its discounted net loss process (as named in \cite{kluppelberg2008integrated}):
\begin{align*}
Z_{k,t}:={-\int\limits_{0-}^te^{-R_{s-}}\mathrm{d}U_{k,s}}=&{-\int\limits_{0-}^te^{-R_{s-}}(c_k\mathrm{d}s-\mathrm{d}\Psi_{k,s})}\\
=&\sum_{j=1}^{r}\sum_{i=1}^{N_{k,t}^j} X_{ki}^je^{-R_{\tau_{ki}^j}} -c_{k}\int\limits_{0}^{t}e^{-R_{s}}\mathrm{d} s.
\end{align*}
Hence, the discounted net loss process of the whole insurance company is:
\begin{align*}
D_t:=\sum_{k=1}^{d}Z_{k,t}=&\sum_{k=1}^{d}\sum_{j=1}^{r} \sum_{i=1}^{N_{k,t}^j}X_{ki}^je^{-R_{\tau_{ki}^j}}-\sum_{k=1}^{d}c_{k}\int\limits_{0}^{t}e^{-R_{s}}\mathrm{d} s\\
=&\sum_{k=1}^d\sum_{j=1}^{r}\sum_{i=1}^{\infty}X_{ki}^je^{-R_{\tau_{ki}^j}}\id_{\left \{ \tau_{ki}^j\le t \right \}} -\sum_{k=1}^{d}c_{k}\int\limits_{0}^{t}e^{-R_{s}}\mathrm{d} s\\
=:&S_{t}-\sum_{k=1}^{d}c_{k}\int\limits_{0}^{t}e^{-R_{s}}\mathrm{d} s.
\end{align*}
Here $S_t$ can be seen as the discounted present value of all the $d$ business lines' aggregate claims. 

For the dependence structure, assume that all claim sizes $\{X_{ki}^j; i\geq 1,\,1\le k\le d, \,1\leq j\leq r\}$ are 
PAI (
defined in Section \ref{sec:2}). Meanwhile, for any $1\le k\le d$, suppose that  
$\{\left(\theta_{ki}^1,\theta_{ki}^2,\ldots,\theta_{ki}^{r}\right);i\geq 1\}$ is a sequence of independent and identically distributed (i.i.d.) nonnegative random vectors (and thus all $N_{k,t}^j$ are renewal processes), but 
the dependence structure among all the components in each vector can be arbitrary.
{ This setup is to accommodate an adequate setting which can capture the inherent temporal dependence of these heterogeneous claims in a certain business line. As the sequential occurrence of property damage and medical claims in the aforementioned car accident example, such temporal dependence 
can make the joint occurrence of heterogeneous claims more likely, thereby potentially inducing greater losses in that line and consequently contributing to systemic risk. 
}Finally, all remaining possible combinations of random variables, such as $\{X_{ki}^j;i\geq 1,\,1\leq j\leq r,\,1\le k\le d\}$, $\{\(N_{1,t}^{1},N_{1,t}^{2},\ldots,N_{1,t}^{r}\),\ldots,\(N_{d,t}^{1},N_{d,t}^{2},\ldots,N_{d,t}^{r}\);t\geq 0\}$, and $\{e^{-R_{t}}\}_{t\ge 0}$, are assumed to be mutually independent. 

For $1\le k\le d,\,1\le j\le r$, define $$\lambda_{k,t}^{j}:=\E\left[N_{k,t}^{j}\right]=\sum_{i=1}^{\infty}\p(\tau_{ki}^j\le t).$$ This paper mainly considers the asymptotic properties on the set $$\Lambda:=\{t\ge 0:0<\lambda_{k,t}^{j}\le\infty, 1\le k\le d,\,1\le j\le r\}.$$ Namely, if $\underline{t}:=\inf\{t\ge 0:\underset{\substack{1\le k\le d,\\1\le j\le r}}{\min}\p(\tau_{k1}^j\le t)>0\},$ then 
\begin{align*}
\Lambda=\begin{cases}
[\underline{t},\infty],\; \text{if}\; \underset{\substack{1\le k\le d,\\1\le j\le r}}{\min}\p(\tau_{k1}^j\le \underline{t})>0;\\
(\underline{t},\infty],\; \text{if}\; \underset{\substack{1\le k\le d,\\1\le j\le r}}{\min}\p(\tau_{k1}^j\le \underline{t})=0.
\end{cases}
\end{align*}
For notational convenience, for $T\in\Lambda$, we use $\Lambda_{T}$ to denote $\Lambda\cap [0,T]$, and $\Lambda^{T}$ to denote $\Lambda\cap [T,\infty]$.

To sum up, in the above model, systemic risk primarily arises from three sources: (a) the joint effect of claims of $d$ business lines, (b) the temporal dependencies among heterogeneous claims within one business line (e.g., when a first claim is necessarily followed by a second), and (c) the financial market fluctuations affecting the investment return of the entire system. 

}

\subsection{Main results}\label{sec:3.2}

We formally present our first result. We give asymptotic expressions of the tail probabilities of the discounted present value of the aggregate claims $S_{t}$ and that of the total loss $D_{t}$ under the proposed model above.
\begin{theorem}\label{thm1}
Consider the $d$-dimensional renewal model with heterogeneous claims introduced in Section \ref{sec:3.1}. Assume for all $1\le k\le d,\,1\le j\le r$, $\overline{F_{k}^j}\in \RV_{-\alpha}$ for some $\alpha>0$, and for all $1\le k_1, k_2\leq d,1\leq j_1,j_2\leq r$, 
$\overline{F_{k_1}^{j_1}}(x)\asymp \overline{F_{k_2}^{j_2}}(x)$.
For the Laplace exponent of the L\'{e}vy process $R_{t}$, suppose there is some $\alpha^{\ast}>\alpha$ such that $\phi(\alpha^{\ast})<0$.  As $x\rightarrow\infty$, we have:\\
({\romannumeral1}) it holds uniformly for all $t\in\Lambda$ that
\begin{align}
\p\left(S_{t}>x\right)\sim \sum_{k=1}^{d}\sum_{j=1}^{r} \overline{F_{k}^j}\left ( x \right )\int\limits_{0-}^{t}e^{s\phi \left ( \alpha \right ) }\mathrm{d}\lambda_{k,s}^{j} ;\label{thm1_1}
\end{align}
({\romannumeral2}) for any fixed $T\in\Lambda$, it holds uniformly for all $t\in\Lambda^{T}$ that
\begin{align}
\p\left(D_{t}>x\right)\sim \sum_{k=1}^{d}\sum_{j=1}^{r} \overline{F_{k}^j}\left ( x \right )\int\limits_{0-}^{t}e^{s\phi \left ( \alpha \right ) }\mathrm{d}\lambda_{k,s}^{j} .\label{thm1_2}   
\end{align}

\end{theorem}


By \eqref{pre3}, $\phi(\cdot)$ is convex for which $\phi(\cdot)$ is finite. Indeed, by H\"{o}lder's inequality, for any $0<\upsilon<1$, for any $a,b$ such that $\phi(a)$ and $\phi(b)$ are finite, we have
{
\begin{align*}
\phi(\upsilon a +(1-\upsilon)b)&=\log \E\left[e^{-(\upsilon a+(1-\upsilon)b) R_{1}}\right]\\&\le\log\left[\left(\E\left[e^{-a R_{1}}\right]\right)^{\upsilon}\left(\E\left[e^{-b R_{1}}\right]\right)^{1-\upsilon}\right]\\
&=\upsilon\log\E\left[e^{-a R_{1}}\right]+(1-\upsilon)\log\E\left[e^{-b R_{1}}\right]\\
&=\upsilon\phi(a)+(1-\upsilon)\phi(b).  
\end{align*}}
Therefore, since $\phi(0)=0$ and $\phi(\alpha^{\ast})<0$, we have for any {$\kappa\in (0,\alpha^{\ast}]$, $\phi(\kappa)<0$}, and thus $\phi(\alpha)<0$ as well. Conversely, since $\phi$ is continuous, if $\phi(\alpha)<0$, there must exist some {$\alpha^{\ast}>\alpha$} such that $\phi(\alpha^{\ast})<0$. Thus, the requirement that there is some $\alpha^{\ast}>\alpha$ such that $\phi(\alpha^{\ast})<0$ is actually equivalent to $\phi(\alpha)<0$.  

Note that the ``$0-$" in the integral sign means approaching 0 from the left side of 0. It is set to avoid that the right-hand side of the asymptotic relation in \eqref{thm1_1} equals to 0 when $t=0$ (if $0\in\Lambda$). Also note that for any $t\in\Lambda$, for all $1\le k\le d,\,1\le j\le r$, we have $$0<\int\limits_{0-}^{t}e^{s\phi \left ( \alpha \right ) }\p\left(\tau_{ki}^j\in\mathrm{d}s\right)<\p\left(\tau_{ki}^j\le t\right)\le 1.$$ Hence, 
\begin{align}
0<\int\limits_{0-}^{t}e^{s\phi \left ( \alpha \right ) }\mathrm{d}\lambda _{k,s}^{j}\le\sum_{i=1}^{\infty}\E\left[e^{\tau_{ki}^j\phi(\alpha)}\right]&=\sum_{i=1}^{\infty}\E\left[e^{\sum_{p=1}^{i}\left(\tau_{kp}^j-\tau_{k(p-1)}^j\right)\phi(\alpha)}\right] \nonumber\\ 
&=\sum_{i=1}^{\infty}\left(\E\left[e^{\tau_{k1}^j\phi(\alpha)}\right]\right)^{i}=\frac{\E\left[e^{\tau_{k1}^j\phi(\alpha)}\right]}{1-\E\left[e^{\tau_{k1}^j\phi(\alpha)}\right]}<\infty. \label{integral<inf}
\end{align}
These conclusions will be very useful in the proof of Theorem \ref{thm1}.

From the right-hand side of the asymptotic expression \eqref{thm1_2}, we can see the ingredients that involve $t$ and those that involve claim sizes (i.e., $\overline{F_{k}^j}(x)$) are separated, and the latter ones indicate the decay rates, while the former ones act as the ``weight" of each corresponding claim.

Now, we formally present our second main result. We give asymptotic expressions of { $\SES_{q,k}(D_{t})$ and $\MES_{q,k}(D_{t})$} of each business line under the proposed model. 
\begin{theorem}\label{thm2}
Under the {setting} in Theorem \ref{thm1}, suppose further there is a distribution function $F$ such that for all $ 1\le k\le d,\,1\le j\le r, \overline{F_{k}^j}(x) \sim a_{k}^j\overline{F} \left ( x \right )$ for some $a_{k}^j>0$. Further, we assume $\alpha>1$. Then for any fixed $T\in\Lambda$, as $q\uparrow 1$, it holds uniformly for all $t\in\Lambda^{T}$ that 
\begin{align}
&\SES_{q,k}(D_{t}) \sim \frac{l_{k}(t)}{\sum_{i=1}^{d}l_{i}(t)} \left ( \left(\sum_{i=1}^{d}l_{i}(t)\right)^{\frac{1}{\alpha}}-\left ( l_{k}(t) \right )^{\frac{1}{\alpha}} + \frac{\left ( \sum_{i=1}^{d}l_{i}(t) \right )^{\frac{1}{\alpha} } }{\alpha-1}\right ) F^{\gets }\left ( q \right),\label{ses}\\
&\MES_{q,k}(D_{t}) \sim \frac{\alpha }{\alpha-1} \frac{l_{k}(t)}{\left( \sum_{i=1}^{d}l_{i}(t) \right )^{1-\frac{1}{\alpha}}} F^{\gets }\left ( q \right ), \label{mes}
\end{align}
where $l_{k}(t):= \sum_{j=1}^{r}a_{k}^j\int\limits_{0-}^{t}e^{s\phi(\alpha)}\mathrm{d}\lambda_{k,s}^{j},\: 1 \le k \le d.$
\end{theorem}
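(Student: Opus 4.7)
The plan is to reduce everything to Theorem \ref{thm1} combined with Karamata-type integral asymptotics, carefully propagating the uniformity in $t \in \Lambda^T$. First, I would obtain the uniform tail asymptotics of the marginal and aggregate losses. Specializing \eqref{thm1_2} to a single business line (formally treating the $k$-th line as a $1$-dimensional model) gives $\p(Z_t^k > x) \sim l_k(t)\overline{F}(x)$, and the full \eqref{thm1_2} together with $\overline{F_k} \sim a_k\overline{F}$, $\overline{G_k}\sim b_k\overline{F}$ gives $\p(D_t > x) \sim \bigl(\sum_{i=1}^d l_i(t)\bigr) \overline{F}(x)$, both uniformly for $t \in \Lambda^T$. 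Because $l_k(t)$ and $\sum_i l_i(t)$ are non-decreasing and bounded above and away from zero on $\Lambda^T$ (the lower bound comes from $l_k(T)>0$ for $T\in\Lambda$, the upper bound from the finite-series estimate displayed right after Theorem \ref{thm1}), I can invert these regularly varying tails via Potter bounds to obtain, uniformly for $t \in \Lambda^T$,
\begin{equation*}
u_q := \VaR_q(Z_t^k) \sim \bigl(l_k(t)\bigr)^{1/\alpha}\, F^{\gets}(q), \qquad x_q := \VaR_q(D_t) \sim \Bigl(\sum_{i=1}^d l_i(t)\Bigr)^{1/\alpha} F^{\gets}(q).
\end{equation*}

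For \eqref{mes}, write $\MES_{q,k}(D_t) = \E[Z_t^k \id_{\{D_t > x_q\}}] / \p(D_t > x_q)$ and split
\begin{equation*}
\E[Z_t^k \id_{\{D_t > x_q\}}] = \E[Z_t^k \id_{\{Z_t^k > x_q\}}] + \E[Z_t^k \id_{\{D_t > x_q,\, Z_t^k \le x_q\}}].
\end{equation*}
Since $\alpha > 1$, Karamata's theorem applied to the regularly varying tail of $Z_t^k$ yields the first piece $\sim \bigl(\alpha/(\alpha-1)\bigr) l_k(t)\, x_q\, \overline{F}(x_q)$ uniformly in $t$. The second piece forces $W_t^k := D_t - Z_t^k > x_q - Z_t^k$; using that $Z_t^k$ and $W_t^k$ inherit asymptotic independence from the PAI of the underlying claims (combined with Theorem \ref{thm1} applied to $W_t^k$ as a $(d-1)$-dimensional sub-model), one shows this cross term is $O(\overline{F}(x_q)) = o(x_q\overline{F}(x_q))$. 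The premium integrals $c_j\int_0^t e^{-R_s}\,ds$ have uniformly bounded first moments on $\Lambda^T$ (since $\phi(1) < 0$ by the convexity argument in the excerpt), so handling the negative parts of $Z_t^k$ and $W_t^k$ introduces only $O(1)$ errors. Dividing by $\p(D_t > x_q)\sim \bigl(\sum_i l_i(t)\bigr)\overline{F}(x_q)$ and substituting the $\VaR$ asymptotics delivers \eqref{mes}.

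For \eqref{ses}, after absorbing the bounded premium term into a negligible additive error, Fubini gives
\begin{equation*}
\E\bigl[(Z_t^k - u_q)^+ \id_{\{D_t > x_q\}}\bigr] = \int_{u_q}^{\infty} \p\bigl(Z_t^k > y,\, D_t > x_q\bigr)\, dy.
\end{equation*}
Split the $y$-integration at $x_q$. For $y \ge x_q$, the event $\{Z_t^k > y\}$ forces $D_t > x_q$ (modulo the negligible premium correction), so this part reduces to $\int_{x_q}^\infty \p(Z_t^k > y)\,dy \sim \bigl(x_q/(\alpha-1)\bigr)\, l_k(t)\overline{F}(x_q)$ by Karamata. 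For $u_q \le y < x_q$, further decompose $\{Z_t^k > y,\, D_t > x_q\}$ into $\{Z_t^k > x_q\}$ (whose $y$-integral gives $(x_q - u_q)\p(Z_t^k > x_q) \sim (x_q - u_q)\, l_k(t)\, \overline{F}(x_q)$) and $\{y < Z_t^k \le x_q,\, W_t^k > x_q - Z_t^k\}$ (an $o$-term by the same asymptotic-independence bound used above, integrated in $y$). Summing the pieces,
\begin{equation*}
\E\bigl[(Z_t^k - u_q)^+ \id_{\{D_t > x_q\}}\bigr] \sim \Bigl(x_q - u_q + \tfrac{x_q}{\alpha-1}\Bigr)\, l_k(t)\, \overline{F}(x_q),
\end{equation*}
and division by $\p(D_t > x_q)$ together with the $\VaR$ asymptotics obtained at the outset yields exactly \eqref{ses}.

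The main obstacle is upgrading every $\sim$ relation above to hold uniformly for $t\in\Lambda^T$. This requires Potter bounds with constants independent of $t$ (routine), uniform control of $l_k(t)$ and $\sum_i l_i(t)$ on $\Lambda^T$ (as sketched above), and a uniform-in-$t$ version of the asymptotic-independence decomposition for $(Z_t^k, W_t^k)$ used in both the Karamata step and the $o$-bounds on cross terms. The latter is the delicate part: although PAI is a pointwise condition on the claims, translating it into uniform tail estimates for the random sums $Z_t^k$ and $W_t^k$ requires combining the uniform conclusion of Theorem \ref{thm1} with a uniform version of Proposition 2.1 of \cite{li2022asymptotic}, which is essentially the technical effort deferred to Section \ref{sec:5}.
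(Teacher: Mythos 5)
Your overall roadmap is essentially the paper's: reduce to uniform VaR asymptotics for $Z_t^k$ and $D_t$ (the paper's Lemma 5.4 and relation (\ref{uniform2})), apply Karamata's theorem to the tail integrals, and control the cross-probability $\p(Z_t^k>y,\,D_t>x_q)$ on the range $y\in[u_q,x_q)$ via an asymptotic-independence decomposition (the paper's Lemmas 5.6--5.8). Your split of the $y$-integral at $x_q$ is the same change of variables the paper uses when it writes $\SES$ as $A_1(x,t)+A_2(x,t)$, and your MES decomposition is exactly the paper's $V_1+V_2$ in Lemma \ref{lemma8}. So the substance matches; the paper just packages the uniform steps into separate lemmas invoking Proposition~0.5 of \cite{resnick2008extreme} for the uniform regular-variation upgrade.

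There is, however, a genuine gap in your treatment of the premium term. You argue that the negative parts of $Z_t^k$ and $W_t^k$, controlled through $\E[\int_0^t e^{-R_s}\,\mathrm{d}s]<\infty$ via $\phi(1)<0$, ``introduce only $O(1)$ errors'' and can be absorbed after dividing by $\p(D_t>x_q)$. This does not suffice. After division by $\p(D_t>x_q)\sim(1-q)\sum_il_i(t)\overline F(x_q)$, an additive $O(1)$ error in the numerator contributes a term of order $(1-q)^{-1}$ to $\MES$ and $\SES$, while the main term is of order $F^{\gets}(q)$, which for $\overline F\in\RV_{-\alpha}$ with $\alpha>1$ grows like $(1-q)^{-1/\alpha}$ up to slowly varying factors. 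Since $1/\alpha<1$, the $O(1)$ error would \emph{dominate} the claimed asymptotic, not be absorbed by it. What one actually needs is that the premium contribution to the numerator is $o\bigl(x_q\overline F(x_q)\bigr)$, and this requires more than a finite first moment: it requires the finite $\alpha^*$-th moment of $\int_0^\infty e^{-R_s}\,\mathrm{d}s$ (which follows from $\phi(\alpha^*)<0$ via Lemma~4.6 of \cite{tang2010uniform}, exactly the ingredient the paper uses in Theorem~\ref{thm1}(ii)) combined with the ``split at $\varepsilon x_q$'' device and relation \eqref{pre1}. A related minor slip: your claim that the MES cross term is ``$O(\overline F(x_q))$'' is too strong; the achievable and sufficient bound is $o(x_q\overline F(x_q))$, obtained by a two-scale split as in the paper's proof of Lemma \ref{lemma8}. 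Fixing these two points --- and filling in the uniform version of the $(Z_t^k,W_t^k)$ asymptotic-independence argument you defer --- brings your outline in line with the paper's proof.
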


From the right-hand sides of \eqref{ses} and \eqref{mes}, it is easy to see that the quantile function of $F$
is the main asymptotic term of $\SES_{q,k}(D_{t})$ and $\MES_{q,k}(D_{t})$, while other information such as, each business line's claim-size ratios to $F$ (i.e., $a_{k}^j$), financial {risk}, and claim frequency, only contribute to the coefficients of the formulas. To further explore properties of these coefficients, fix some $t\in\Lambda^{T}$ and  $\sum_{i=1}^{d}l_{i}(t)$. For a certain $1\le k\le d$, define $\rho_{k}:=\frac{l_{k}(t)}{\sum_{i=1}^{d}l_{i}(t)}$. In other words, $0<\rho_{k}<1$ indicates the proportion of the combined information of the $k$-th business line's claim-size ratios, financial risk and claim frequency in that combined information of the entire entity. Further, define $h_{\SES}(\rho_{k}):=\rho_{k}\left(1-\rho_{k}^{\frac{1}{\alpha}}+\frac{1}{\alpha-1}\right),\,h_{\MES}(\rho_{k}):=\frac{\alpha}{\alpha-1}\rho_{k}$. Then \eqref{ses} and \eqref{mes} can be rewritten as:
\begin{align*}
&\SES_{q,k}(D_{t})\sim\rho_{k}\left(1-\rho_{k}^{\frac{1}{\alpha}}+\frac{1}{\alpha-1}\right)\left(\sum_{i=1}^{d}l_{i}(t)\right)^{\frac{1}{\alpha}}F^{\gets }\left ( q \right )=h_{\SES}(\rho_{k})\times\left(\sum_{i=1}^{d}l_{i}(t)\right)^{\frac{1}{\alpha}}F^{\gets }\left ( q \right ),\\
&\MES_{q,k}(D_{t})\sim\frac{\alpha}{\alpha-1}\rho_{k}\left(\sum_{i=1}^{d}l_{i}(t)\right)^{\frac{1}{\alpha}}F^{\gets }\left ( q \right )=h_{\MES}(\rho_{k})\times\left(\sum_{i=1}^{d}l_{i}(t)\right)^{\frac{1}{\alpha}}F^{\gets }\left ( q \right ).
\end{align*}
Note that 
\begin{align*}
\frac{\mathrm{d}h_{\text{SES}}(\rho_{k})}{\mathrm{d}\rho_{k}}=-\frac{\alpha+1}{\alpha}\rho_{k}^{\frac{1}{\alpha}}+\frac{\alpha}{\alpha-1},\,\,\, \frac{\mathrm{d}h_{\text{MES}}(\rho_{k})}{\mathrm{d}\rho_{k}}=\frac{\alpha}{\alpha-1}.
\end{align*}
Obviously, $\frac{\mathrm{d}h_{\text{MES}}(\rho_{k})}{\mathrm{d}\rho_{k}}>0$ for all $0<\rho_{k}<1$; $\frac{\mathrm{d}h_{\SES}(\rho_{k})}{\mathrm{d}\rho_{k}}$ is decreasing in $\rho_{k}$, and $\frac{\mathrm{d}h_{\text{SES}}(\rho_{k})}{\mathrm{d}\rho_{k}}\big|_{\rho_{k}=0}=\frac{\alpha}{\alpha-1}>0,\, \frac{\mathrm{d}h_{\text{SES}}(\rho_{k})}{\mathrm{d}\rho_{k}}\big|_{\rho_{k}=1}=\frac{1}{\alpha(\alpha-1)}>0$. Hence, $\frac{\mathrm{d}h_{\text{SES}}(\rho_{k})}{\mathrm{d}\rho_{k}}>0$ for all $0<\rho_{k}<1$. Therefore, we can see both the coefficients of asymptotic expressions of $\MES_{q,k}(D_{t})$ and $\SES_{q,k}(D_{t})$ increase with respect to $\rho_{k}$. This conclusion aligns with anticipations: as the $k$-th business line's proportion of claim-size ratios, financial {risk}, and claim frequency within the entire entity increases, the asymptotic expressions for $\SES_{q,k}(D_{t})$ and $\MES_{q,k}(D_{t})$ also increase, indicating a corresponding increase in the expected systemic risk for that business line. {The impact of parameters related to financial {market fluctuations}, claim frequency and other factors will be further discussed in the next section.}

\section{Numerical study}\label{sec:4}
{ In this section, we employ numerical methods to verify the accuracy of our main results and analyze the impact and sensitivity of variations of each parameter on the asymptotic approximations. In all the following subsections, we use $p^A(x,t)$, $\SES^A(q,k,t)$ and $\MES^A(q,k,t)$ to denote the asymptotic approximations of $\p(D_t>x)$, $\SES_{q,k}(D_t)$ and $\MES_{q,k}(D_t)$, and use $p^E(x,t)$, $\SES^E(q,k,t)$ and $\MES^E(q,k,t)$ to denote the corresponding empirical estimates.

\subsection{Model instantiation and accuracy examination}\label{subsection4.1}
To specify the setting, we assume the insurer has two business lines (i.e., $d=2$), each of which has two heterogeneous claims (i.e., $r=2$). For $k=1,2,\,j=1,2$, suppose that their corresponding claim sizes $\{X_{ki}^j;i\geq 1\}$ follow respectively Pareto distributions with a common parameter $\alpha>1$ and four different parameters $\gamma_{k}^j$, namely, 
$$
F_{k}^j(x)=1-\left(\frac{\gamma_{k}^j}{\gamma_{k}^j+x}\right)^{\alpha}, \; x\ge 0.
$$ 
Clearly, $\overline{F_{k}^j}\in \RV_{-\alpha}$, and $\overline{F_{k}^j}(x)\sim\left(\frac{\gamma_{k}^j}{\gamma_{1}^1}\right)^{\alpha}\overline{F_{1}^1}(x)$. The dependence structure of these claim sizes is characterized by a Frank copula $$C(u)=-\frac{1}{\Theta}\mathrm{In}\left(1-\left(1-e^{-\Theta}\right)^{p+1}\prod_{i=1}^{p}\left(1-e^{-\Theta u_{i}}\right)\right),u_{i}\in [0,1]\: \mbox{for}\, i=1,\ldots,p,\,\text{where }p\in\mathbb{N}^+,
$$
so that they are guaranteed to be PAI. 

In the meantime, we specify the investment return process $\{R_t\}_{t\geq0}$ to be
\begin{align}
R_t=\mu t+\sigma W_t+\sum_{i=1}^{N_t^{\prime}}\zeta_i,  \label{Rt}  
\end{align}
where $\mu,\sigma\ge0$ are parameters of the drift and diffusion terms, $\{W_t\}_{t\geq0}$ is a standard Brownian motion, and $\sum_{i=1}^{N_t^{\prime}}\zeta_i$ is a compound Poisson process with i.i.d. standard normal random variables $\zeta_i, i=1,2,\dots$, and $\{N_t^{\prime}\}_{t\geq0}$ is a homogeneous Poisson process with intensity $\lambda$. All terms in $\{R_{t}\}_{t\ge 0}$ are assumed to be mutually independent and independent of other components in the model. Hence, the Laplace exponent of $\{R_{t}\}_{t\ge 0}$ is
\begin{align}
\phi_R(\alpha)=-\mu\alpha+\frac{1}{2}\sigma^2\alpha^2+\lambda\left(\E\left[e^{-\alpha \zeta_1}\right]-1\right)=-\mu\alpha+\frac{1}{2}\sigma^2\alpha^2+\lambda\left(e^{\frac{\alpha^2}{2}}-1\right). \label{phi_Rt}
\end{align}

Also, specify all the four renewal processes $\{N_{k,t}^{j}\}_{t\geq0}$ to be {independent} homogeneous Poisson processes with corresponding intensity parameters $\lambda_{k}^j$. 
Here, we focus on independent renewal processes for computational tractability and notational simplicity. Although we omit dependent-arrival simulations, our theoretical results accommodate arbitrary temporal dependence and establish the same asymptotic conclusions.  
Therefore, according to Theorems \ref{thm1} and \ref{thm2}, by simple calculations, we have that the asymptotic approximations are: 
\begin{align*}
&p^A(x,t)=\sum_{k=1}^2\sum_{j=1}^2\lambda_k^j\left(\frac{\gamma_k^j}{\gamma_k^j+x}\right)^{\alpha}\frac{e^{t\phi_R(\alpha)}-1}{\phi_R(\alpha)},\\
&\SES^A(q,k,t)=\frac{l_{k}(t)}{\sum_{i=1}^{2}l_{i}(t)} \left ( \left(\sum_{i=1}^{2}l_{i}(t)\right)^{\frac{1}{\alpha}}-\left ( l_{k}(t) \right )^{\frac{1}{\alpha}} + \frac{\left ( \sum_{i=1}^{2}l_{i}(t) \right )^{\frac{1}{\alpha} } }{\alpha-1}\right ) F_1^{1\gets }\left ( q \right),\\
&\MES^A(q,k,t)=\frac{\alpha }{\alpha-1} \frac{l_{k}(t)}{\left( \sum_{i=1}^{2}l_{i}(t) \right )^{1-\frac{1}{\alpha}}} F_1^{1\gets }\left ( q \right ), \, k=1,2,
\end{align*}
where $$l_k(t)=\left(\left(\frac{\gamma_k^1}{\gamma_1^1}\right)^{\alpha}\lambda_k^1+\left(\frac{\gamma_k^2}{\gamma_1^1}\right)^{\alpha}\lambda_k^2\right)\frac{e^{t\phi_R(\alpha)}-1}{\phi_R(\alpha)},\,\,k=1,2.$$

Then, we apply a Monte Carlo algorithm to simulate $\p(D_t>x)$, $\mathrm{SES}_{q,k}(D_{t})$ and $\mathrm{MES}_{q,k}(D_{t})$. 
For each group of parameters, after implementing a certain numerical algorithm for $N$ times, we get $N$ experimental results of the two business lines' discounted losses $\widetilde{Z_{k,t}^{i}},\,k=1,2$ and the total loss $\widetilde{D_{t}^{i}}$, where $i=1,2,\ldots,N$. Let $\widetilde{Z_{k,t}^{(1)}}\le\widetilde{Z_{k,t}^{(2)}}\le\cdots\le \widetilde{Z_{k,t}^{(N)}},\,k=1,2$ and $\widetilde{D_{t}^{(1)}}\le \widetilde{D_{t}^{(2)}}\le\cdots\le\widetilde{D_{t}^{(N)}}$ be the corresponding order statistics. Then the empirical estimates are: 
\begin{align*}
&p^E(x,t)=\frac{\sum_{i=1}^N\id_{\left\{\widetilde{D_t^i}>x\right\}}}{N},\\
&\SES^E(q,k,t)=\frac{\sum_{i=1}^{N}\left(\widetilde{Z_{k,t}^{i}}-\widetilde{Z_{k,t}^{(\lfloor Nq\rfloor)}}\right)^{+}\id_{\left\{\widetilde{D_{t}^{i}}>\widetilde{D_{t}^{(\lfloor Nq\rfloor)}}\right\}}}{\sum_{i=1}^{N}\id_{\left\{\widetilde{D_{t}^{i}}>\widetilde{D_{t}^{(\lfloor Nq\rfloor)}}\right\}}},\\
&\MES^E(q,k,t)=\frac{\sum_{i=1}^{N}\widetilde{Z_{k,t}^{i}}\id_{\left\{\widetilde{D_{t}^{i}}>\widetilde{D_{t}^{(\lfloor Nq\rfloor)}}\right\}}}{\sum_{i=1}^{N}\id_{\left\{\widetilde{D_{t}^{i}}>\widetilde{D_{t}^{(\lfloor Nq\rfloor)}}\right\}}},\,k=1,2,
\end{align*}
where for $a\in\R^{+}$, $\lfloor a\rfloor$ denotes the maximal integer that is smaller than $a$. Finally, we do a comparison between asymptotic approximations and empirical estimates to see the degree of consistency between them.
\begin{table}[htbp]
    \centering
    \begin{tabular}{ccccc}
    \toprule
     $x$ & Asymptotic approximation & Empirical estimate & $\frac{\text{Empirical}}{\text{Asymptotic}}$ \\
    \midrule
     100 & $1.3527\times 10^{-3}$ & $1.3605\times 10^{-3}$ & 1.0058 \\
     150 & $7.5376\times 10^{-4}$ & $7.5310\times 10^{-4}$ & 0.9991 \\
     200 & $4.9543\times 10^{-4}$ & $4.8320\times 10^{-4}$ & 0.9753 \\
     250 & $3.5705\times 10^{-4}$ & $3.5140\times 10^{-4}$ & 0.9842 \\
     300 & $2.7293\times 10^{-4}$ & $2.6890\times 10^{-4}$ & 0.9852 \\
    \bottomrule
    \end{tabular}
    \caption{Empirical and asymptotic values of $\p(D_{t}>x)$ w.r.t. $x$}
    \label{tab:my_label}
\end{table}
\begin{figure}[htbp] 
    \centering
    \begin{minipage}[t]{0.48\linewidth}
        \centering
        \includegraphics[width=7.9cm, height=7.2cm]{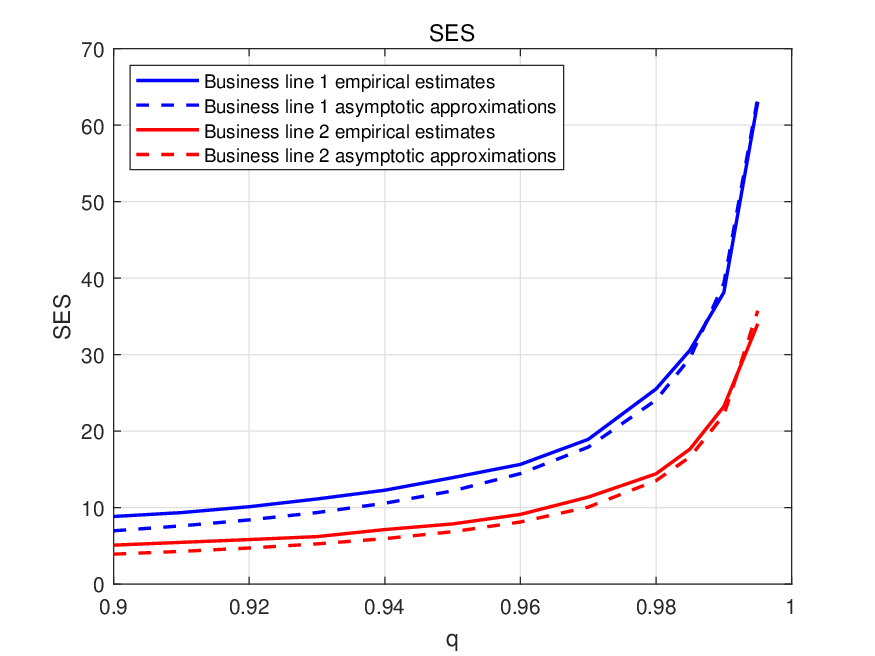}
    \end{minipage}
    \hfill
    \begin{minipage}[t]{0.48\linewidth}
        \centering
        \includegraphics[width=7.9cm, height=7.2cm]{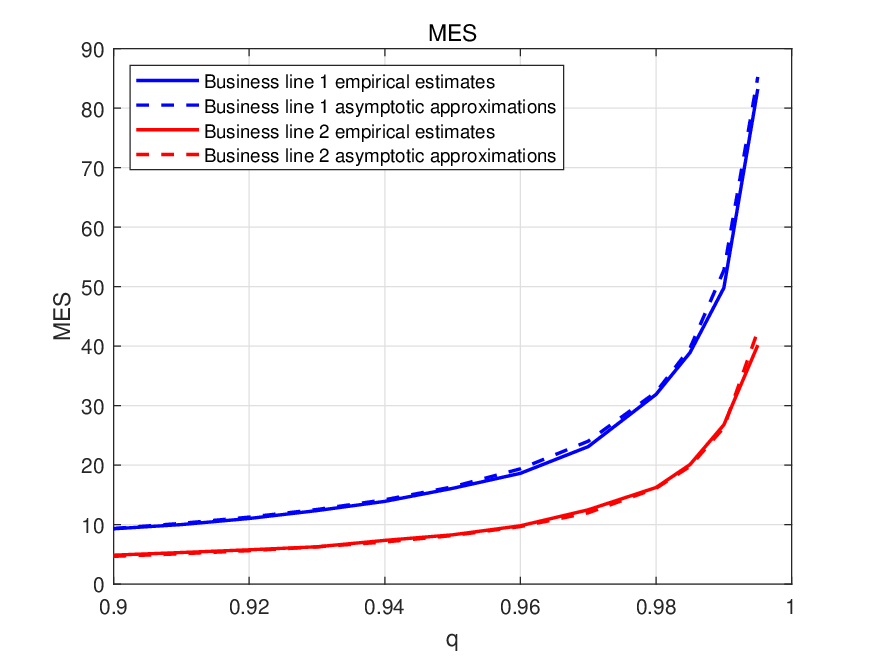}
    \end{minipage}
    \caption{Empirical and asymptotic values of $\text{SES}_{q,k}(D_{t})$ and $\text{MES}_{q,k}(D_{t})$ w.r.t. $q$}
    \label{fig:SES_MES}
\end{figure}


In this subsection, we set $N=10^6$, $\alpha=1.5$, $(\gamma_{1}^1,\gamma_{1}^2,\gamma_{2}^1,\gamma_{2}^2)=(4,6,2,4)$, $(\mu,\sigma,\lambda)=(10,0.2,0.5)$, $(\lambda_1^1,\lambda_1^2,\lambda_2^1,\lambda_2^2)=(0.4,0.7,0.4,0.7)$, $t=10$, $c_1=c_2=5$. For the simulation of Theorem \ref{thm1}, we take several $x$'s from 100 to 300, and the numerical results have been listed in Table \ref{tab:my_label}, where all the quotients of the empirical estimates and asymptotic approximations lie in the interval $[0.95,1.05]$. For the simulation of Theorem \ref{thm2}, we take several $q$'s equidistantly from 0.900 to 0.990 and an extra $q=0.995$ to illustrate the trend $q\uparrow 1$, and the results are plot in Figure \ref{fig:SES_MES}, where the curves of asymptotic approximations closely match those of empirical estimates. Therefore, the accuracy of our asymptotics is validated.

\subsection{Parameter impact and sensitivity analysis}
In this subsection, with our model set as in Subsection \ref{subsection4.1}, we analyze the impact of variation of parameters in the investment return process $\{R_t\}_{t\geq 0}$, the Poisson intensities $\lambda_k^j$, and the regularly varying index $\alpha$ on {$p^A(x,t)$, $\SES^A(q,k,t)$ and $\MES^A(q,k,t)$}. 

Suppose that the investment return process is given by \eqref{Rt}, where $R_t$ consists of a steadily increasing drift term $\mu t$, a diffusion term $\sigma W_t$ capturing market volatility, and a jump component $\sum_{i=1}^{N_t^{\prime}}\zeta_i$ reflecting sudden and significant changes. A direct analytical result demonstrates that the asymptotic approximations of $\p(D_t>x)$, $\SES_{q,k}(D_t)$, and $\MES_{q,k}(D_t)$ exhibit certain monotonic relationships with the drift, diffusion, and jump parameters of $R_t$, as stated in the following proposition.
\begin{proposition}\label{prop1}
Under the {setting} of Theorems \ref{thm1} and \ref{thm2}, suppose further the renewal processes are all homogeneous Poisson processes, and for $1\le k\le d,\,1\le j\le r$, $t\ge 0$, $\lambda_{k,t}^j=\lambda_k^jt$ where $\lambda_k^j$ are intensities of corresponding Poisson processes, and the investment return process $R_t$ is of the form (\ref{Rt}). Then, we have:\\
(i) the asymptotic approximations  $p^A(x,t)$, $\SES^A(q,k,t)$ and $\MES^A(q,k,t)$ are decreasing with $\mu$, given other parameters are fixed;\\
(ii) the aforementioned asymptotic approximations are increasing with  $\sigma$ or $\lambda$, given other parameters are fixed.
\end{proposition}

Indeed, Proposition \ref{prop1} aligns with the intuition: $\mu$ represents the return from stable investments, while $\sigma$ reflects market volatility and $\lambda$ reflects sudden change. A higher stable return will reduce the bankruptcy probability and expected capital shortfall of {a company}, and greater market fluctuations will lead to larger risk.

\begin{table}[htbp]
\centering
\small
\begin{threeparttable}
\begin{tabular}{lcccc}
\toprule
\multirow{2}{*}{} & \multirow{2}{*}{} & \multicolumn{2}{c}{\textbf{Percentage change w.r.t. the baseline}} \\
\cmidrule(lr){3-5}
& & $q=0.97$ & $q=0.98$ & $q=0.99$ \\
\midrule
\multirow{5}{*}{$\SES^A(q,1,t)$} 
    & +2\% & -14.03\% & -14.54\% & -15.42\% \\
    & +1\% & -7.45\% & -7.73\% & -8.21\% \\
    & $\alpha = 1.2$ & 0.00 & 0.00  & 0.00 \\
    & $-1\%$ & +8.49\% & +8.82\% & +9.41\% \\
    & $-2\%$ & +18.23\% & +18.97\% & +20.27\% \\
\midrule
\multirow{5}{*}{$\SES^A(q,2,t)$} 
    & +2\% & -14.44\% & -14.94\% & -15.82\% \\
    & +1\% & -7.67\% & -7.95\% & -8.43\% \\
    & $\alpha = 1.2$ & 0.00 & 0.00 & 0.00 \\
    & $-1\%$ & +8.75\% & +9.08\% & +9.67\% \\
    & $-2\%$ & +18.80\% & +19.54\% & +20.84\% \\
\midrule
\multirow{5}{*}{$|\SES^A(q,1,t)- \SES^A(q,2,t)|$} 
    & +2\% & -13.43\% & -13.94\% & -14.83\% \\
    & +1\% & -7.13\% & -7.40\% & -7.89\% \\
    & $\alpha = 1.2$ & 0.00 & 0.00 & 0.00 \\
    & $-1\%$ & +8.11\% & +8.44\% & +9.02\% \\
    & $-2\%$ & +17.40\% & +18.13\% & +19.42\% \\
\bottomrule
\end{tabular}
\caption{Asymptotic approximations for $\SES_{q,k}(D_t)$ at different $q$ values}
\label{table:ses_sens}
\begin{tablenotes}
\item 
\end{tablenotes}
\end{threeparttable}
\end{table}
\begin{table}[htbp]
\centering
\small
\begin{threeparttable}
\begin{tabular}{lcccc}
\toprule
\multirow{2}{*}{} & \multirow{2}{*}{} & \multicolumn{2}{c}{\textbf{Percentage change w.r.t. the baseline}} \\
\cmidrule(lr){3-5}
& & $q=0.97$ & $q=0.98$ & $q=0.99$ \\
\midrule
\multirow{5}{*}{$\MES^A(q,1,t)$} 
    & +2\% & -12.80\% & -13.31\% & -14.21\% \\
    & +1\% & -6.79\% & -7.06\% & -7.55\% \\
    & $\alpha = 1.2$ & 0.00 & 0.00  & 0.00 \\
    & $-1\%$ & +7.71\% & +8.04\% & +8.63\% \\
    & $-2\%$ & +16.55\% & +17.27\% & +18.55\% \\
\midrule
\multirow{5}{*}{$\MES^A(q,2,t)$} 
    & +2\% & -13.71\% & -14.21\% & -15.10\% \\
    & +1\% & -7.27\% & -7.55\% & -8.04\% \\
    & $\alpha = 1.2$ & 0.00 & 0.00 & 0.00 \\
    & $-1\%$ & +8.28\% & +8.61\% & +9.20\% \\
    & $-2\%$ & +17.78\% & +18.51\% & +19.81\% \\
\midrule
\multirow{5}{*}{$|\MES^A(q,1,t)- \MES^A(q,2,t)|$} 
    & +2\% & -11.60\% & -12.12\% & -13.03\% \\
    & +1\% & -6.14\% & -6.42\% & -6.91\% \\
    & $\alpha = 1.2$ & 0.00 & 0.00 & 0.00 \\
    & $-1\%$ & +6.97\% & +7.29\% & +7.87\% \\
    & $-2\%$ & +14.93\% & +15.64\% & +16.90\% \\
\bottomrule
\end{tabular}
\caption{Asymptotic approximations for $\MES_{q,k}(D_t)$ at different $q$ values}
\label{table:mes_sens}
\begin{tablenotes}
\item 
\end{tablenotes}
\end{threeparttable}
\end{table}
{Applying a similar analysis, we can see that given other parameters are fixed, the business lines with higher claim arrival rates (corresponding to Poisson intensities $\lambda_k^j,\,1\le k\le d,\,1\le j\le r$ in Proposition \ref{prop1}) will have higher {$p^A(x,t)$, $\SES^A(q,k,t)$ and $\MES^A(q,k,t)$} (reflecting the bankruptcy probability and expected capital shortfall). We now examine the impact of increased claim frequency on the absolute differences in the asymptotic approximations of SES (or MES) between different business lines. Following a similar proof as in Proposition \ref{prop1}, it can be shown that when all the intensities $\lambda_k^j,\,1\le k\le d,\,1\le j\le r$ increase proportionally (e.g., by $1\%$), the absolute differences---if nonzero initially---will increase. This implies that more frequent claims will exacerbate the disparity in capital shortfall among different business lines.}

As for the impact of changes in the regularly varying index $\alpha$, due to the complicated structure of the function between our asymptotic approximations and $\alpha$, it is quite hard to analytically examine the impact of variation of $\alpha$. Thus, we conduct {a} sensitivity analysis of $\SES^A(q,k,t)$, $\MES^A(q,k,t)$ and differences $|\SES^A(q,1,t)- \SES^A(q,2,t)|$ and $|\MES^A(q,1,t)- \MES^A(q,2,t)|$ with respect to the variation of $\alpha$. We set $\alpha=1.2$ as the benchmark, and apply small perturbations to $\alpha$ to examine how the aforementioned quantities change under different $q$. The results are presented in Tables \ref{table:ses_sens} and \ref{table:mes_sens}. As shown in these tables, for all values of $q$, those quantities decrease as $\alpha$ increases, and the rate of decrease is greater for a larger $q$, indicating a higher sensitivity to changes in $\alpha$.
}

\section{Proof of main results}\label{sec:5}
Before the proofs are stated, some remarks are worth noting. If asymptotic relations are considered only at each time point, which transfers the corresponding stochastic processes into certain random variables, the proof will be much easier. Actually, the one-dimensional case of (\ref{thm1_1}) at a fixed time point is just a simple corollary of Theorem 3.3 in \cite{chen2009sums}. The key difficulty is how to deal with uniform convergence. To address this issue, we split the complicated probabilities or expectations into simple components, and deal with each component individually. For each component, we isolate the ingredients that involve $t$ and demonstrate that the remaining parts of the variables converge independently of $t$. In the proof of Theorem \ref{thm2}, we employ a useful proposition (Proposition 0.5 of \cite{resnick2008extreme}), which demonstrates that, a regularly varying function can automatically hold the regularly varying property uniformly on some intervals. This property, together with the classic ``$\varepsilon$-$\delta$" definition of convergence which transfers asymptotic relations into inequalities, allows us to address uniform asymptotic relations.
\subsection{On Theorem 3.1}
In the proof of Theorem \ref{thm1}, we mainly refer to the methods of proving Theorem 3.1 in \cite{tang2010uniform} and \cite{li2012asymptotics}. Note that the setting of time-independence can be reduced from a time-dependent assumption in \cite{li2012asymptotics}. Hence, according to Lemma 3.9 of that paper, we have the following lemma. 

{
\begin{lemma}\label{lemma1}
Under the {setting} of Theorem \ref{thm1}, for all $i\geq 1$, all $1\le k\le d$, and all $1\le j\le r$, uniformly for all $t\in\Lambda$, we have
\begin{align*}
\p\left(X_{ki}^je^{-R_{\tau_{ki}^j}}\id_{\left \{ \tau_{ki}^j\le t \right \}}>x\right)\sim \overline{F_{k}^j}\left ( x \right )\int\limits_{0-}^{t}e^{s\phi \left ( \alpha \right)}\p\left(\tau_{ki}^j\in \mathrm{d}s\right).
\end{align*}    
\end{lemma}
}

Furthermore, we have the following result.

{
\begin{lemma}\label{lemmabu}
Under the {setting} of Theorem \ref{thm1}, for all $p,l\geq 1$, all $1\le k,h\le d$, and all $1\le a,b\le r$, uniformly for all $t\in\Lambda$, we have
\begin{align*}
&\quad\p\left( X_{kp}^ae^{-R_{\tau_{kp}^a}}\id_{\left \{ \tau_{kp}^a\le t \right \}}>x, X_{hl}^be^{-R_{\tau_{hl}^b}}\id_{\left \{ \tau_{hl}^b\le t \right \}}>x \right)\\
&= o(1)\left(\p\left( X_{kp}^ae^{-R_{\tau_{kp}^a}}\id_{\left \{ \tau_{kp}^a\le t \right \}}>x\right)+\p\left(X_{hl}^be^{-R_{\tau_{hl}^b}}\id_{\left \{ \tau_{hl}^b\le t \right \}}>x\right)\right).
\end{align*}
\end{lemma}
}
\begin{proof}
By basic set theory and {the} subadditivity of probability measures, for any positive function $a(x)$, we have 
\begin{align*}
&\quad\p\left( X_{kp}^ae^{-R_{\tau_{kp}^a}}\id_{\left \{ \tau_{kp}^a\le t \right \}}>x, X_{hl}^be^{-R_{\tau_{hl}^b}}\id_{\left \{ \tau_{hl}^b\le t \right \}}>x \right) \\
&\le \p\left( X_{kp}^ae^{-R_{\tau_{kp}^a}}\id_{\left \{ \tau_{kp}^a\le t \right \}}>x, X_{hl}^be^{-R_{\tau_{hl}^b}}\id_{\left \{ \tau_{hl}^b\le t \right \}}>x, e^{-R_{\tau_{kp}^a}}\id_{\left \{ \tau_{kp}^a\le t \right \}}\le {a(x)}, e^{-R_{\tau_{hl}^b}}\id_{\left \{ \tau_{hl}^b\le t \right \}}\le{a(x)} \right) \\
&\quad+ \p\left(e^{-R_{\tau_{kp}^a}}\id_{\left \{ \tau_{kp}^a\le t \right \}}>{a(x)}\right) + \p\left(e^{-R_{\tau_{hl}^b}}\id_{\left \{ \tau_{hl}^b\le t \right \}}>{a(x)}\right)\\
&=: I_{1}(x,t) + I_{2}(x,t) + I_{3}(x,t).
\end{align*}
And according to Lemma 3.3 in \cite{li2012asymptotics}, there exists a positive function $a(x)$ satisfying
$a(x)\rightarrow\infty$ and $a(x) = o(x)$, such that it holds uniformly for all $t\in\Lambda$ that
\begin{align}
I_{2}(x,t)=o(1)\p\left(X_{kp}^ae^{-R_{\tau_{kp}^a}}\id_{\left \{ \tau_{kp}^a\le t \right \}}>x\right),\:I_{3}(x,t)=o(1)\p\left(X_{hl}^be^{-R_{\tau_{hl}^b}}\id_{\left \{ \tau_{hl}^b\le t \right \}}>x\right).\label{I1,I2}
\end{align}
For $I_{1}(x,t)$, since $X_{kp}$ and $X_{hl}$ are AI, for all $\varepsilon>0$, there is an $M>0$ such that for all $x^{\ast}>M$, $$\p\left(X_{kp}^a>x^{\ast},X_{hl}^b>x^{\ast}\right)\le \varepsilon \p\left(X_{kp}^a>x^{\ast}\right).$$ 
Take large enough $x$ such that $x/a(x)>M$. For all $t\in\Lambda$, suppose $G_{t}(\cdot,\cdot)$ is the joint distribution of $e^{-R_{\tau_{kp}^a}}\id_{\left \{ \tau_{kp}^a\le t \right \}}$ and $e^{-R_{\tau_{hl}^b}}\id_{\left \{ \tau_{hl}^b\le t \right \}}$. Then, we have
\begin{align}
I_{1}(x,t)&=\iint\limits_{0<w_{1},w_{2}\le a(x)}\p\left(w_{1}X_{kp}^a>x,w_{2}X_{hl}^b>x\right)G_{t}(\d w_{1},\d w_{2}) \nonumber\\
&\le\varepsilon \iint\limits_{0<w_{1},w_{2}\le a(x)}\p\left(w_{1}X_{kp}^a>x\right)G_{t}(\mathrm{d}w_{1},\mathrm{d}w_{2}) \nonumber \\
&\le \varepsilon \p\left(X_{kp}^ae^{-R_{\tau_{kp}^a}}\id_{\left \{ \tau_{kp}^a\le t \right \}}>x\right).\label{lemma5.4.2}
\end{align}
Therefore, by (\ref{I1,I2}) and (\ref{lemma5.4.2}), we can get the result.
\end{proof}

Further, using Lemmas \ref{lemma1} and \ref{lemmabu}, we can derive the following two lemmas.
{
\begin{lemma}\label{lemma3}
Under the {setting} of Theorem \ref{thm1}, for all $m=1,2,{\ldots}$, it holds uniformly for all $t \in \Lambda$ that
\begin{align*}
\quad\p\left ( \sum_{k=1}^{d}\sum_{j=1}^{r}\sum_{i=1}^{m}X_{ki}^je^{-R_{\tau_{ki}^j}}\id_{\left \{ \tau_{ki}^j\le t \right \}}> x \right )\sim \sum_{k=1}^{d}\sum_{j=1}^{r} \overline{F_{k}^j}\left ( x \right )\int\limits_{0-}^{t}e^{s\phi \left ( \alpha \right)}\sum_{i=1}^{m}\p\left(\tau_{ki}^j\in \mathrm{d}s\right).  \end{align*}    
\end{lemma}
}
\begin{proof}
In the following proof, we only consider the case when $d=1,\,r=2$ for simplicity, and more general {cases} can be derived similarly. We use the notation:$$H_{t}:=\sum_{j=1}^2\sum_{i=1}^{m}X_{1i}^je^{-R_{\tau_{1i}^j}}\id_{\left \{ \tau_{1i}^j\le t \right \}}.$$ That is, we need to prove that it uniformly holds for all $t\in \Lambda$ that
\begin{align}
\p\left(H_{t}>x\right) \sim \sum_{j=1}^2\sum_{i=1}^{m}\p\left(X_{1i}^je^{-R_{\tau_{1i}^j}}\id_{\left \{ \tau_{1i}^j\le t \right \}}>x\right),\label{lemma5.4.1}
\end{align}
and then the conclusion comes naturally from Lemma \ref{lemma1}.
First, notice that
\begin{align*}
\p\left(H_{t}>x\right) &\ge \p\left(\bigcup_{j=1}^2\bigcup_{i=1}^{m}\left\{X_{1i}^je^{-R_{\tau_{1i}^j}}\id_{\left \{ \tau_{1i}^j\le t \right \}}>x\right\}\right) \\
&\ge \sum_{j=1}^2\sum_{i=1}^{m}\p\left(X_{1i}^je^{-R_{\tau_{1i}^j}}\id_{\left \{ \tau_{1i}^j\le t \right \}}>x\right) \\
&\quad- \sum_{j=1}^2\sum_{1\le p <{l}\le m}\p\left( X_{1p}^je^{-R_{\tau_{1p}^j}}\id_{\left \{ \tau_{1p}^j\le t \right \}}>x, X_{1l}^je^{-R_{\tau_{1l}^j}}\id_{\left \{ \tau_{1l}^j\le t \right \}}>x \right)\\
&\quad-\sum_{1\le p,{l}\le m}\p\left( X_{1p}^1e^{-R_{\tau_{1p}^1}}\id_{\left \{ \tau_{1p}^1\le t \right \}}>x, X_{1l}^2e^{-R_{\tau_{1l}^2}}\id_{\left \{ \tau_{1l}^2\le t \right \}}>x \right).
\end{align*}
Therefore, the lower-bound direction of (\ref{lemma5.4.1}) is achieved through Lemma \ref{lemmabu}. Next, consider the upper-bound direction of (\ref{lemma5.4.1}). For any fixed $0<\delta<1$, we have
\begin{align*}
&\quad \p\left(H_{t}>x\right)\\
&\le \p\left(\bigcup_{j=1}^2\bigcup_{i=1}^{m}\left\{X_{1i}^je^{-R_{\tau_{1i}^j}}\id_{\left \{ \tau_{1i}^j\le t \right \}}>(1-\delta)x\right\}\right) \\
&\quad+ \p\left(H_{t}>x,\bigcup_{j=1}^2\bigcup_{i=1}^{m}\left\{X_{1i}^je^{-R_{\tau_{1i}^j}}\id_{\left \{ \tau_{1i}^j\le t \right \}}>\frac{x}{2m}\right\},\bigcap_{j=1}^2\bigcap_{i=1}^{m}\left\{X_{1i}^je^{-R_{\tau_{1i}^j}}\id_{\left \{ \tau_{1i}^j\le t \right \}}\le(1-\delta)x\right\}\right)\\
&=:J_{1}(x,t)+J_{2}(x,t).
\end{align*}
For $J_{1}(x,t)$, by Lemma \ref{lemma1}, uniformly for all $t\in\Lambda$, we have
\begin{align*}
J_{1}(x,t)&\le\sum_{j=1}^2\sum_{i=1}^{m}\p\left(X_{1i}^je^{-R_{\tau_{1i}^j}}\id_{\left \{ \tau_{1i}^j\le t \right \}}>(1-\delta)x\right)\\
&\sim(1-\delta)^{-\alpha}\sum_{j=1}^2\overline{F_{1}^j}\left ( x \right )\int\limits_{0-}^{t}e^{s\phi \left ( \alpha \right)}\sum_{i=1}^{m}\p\left(\tau_{1i}^j\in \mathrm{d}s\right).
\end{align*}
For $J_{2}(x,t)$, by Lemma \ref{lemmabu}, uniformly for all $t\in\Lambda$, we have
\begin{align*}
J_{2}(x,t)&\le \sum_{j=1}^2\sum_{i=1}^{m}\p\left(X_{1i}^je^{-R_{\tau_{1i}^j}}\id_{\left \{ \tau_{1i}^j\le t \right \}}>\frac{x}{2m},H_{t}-X_{1i}^je^{-R_{\tau_{1i}^j}}\id_{\left \{ \tau_{1i}^j\le t \right \}}> \delta x\right)\\
&\le \sum_{j=1}^2\sum_{1\le p\ne l\le m}\p\left(X_{1p}^je^{-R_{\tau_{1p}^j}}\id_{\left \{ \tau_{1p}^j\le t \right \}}>\frac{\delta x}{2m},X_{1l}^je^{-R_{\tau_{1l}^j}}\id_{\left \{ \tau_{1l}^j\le t \right \}}>\frac{\delta x}{2m}\right) \\
&\quad+2\sum_{p=1}^{m}\sum_{l=1}^{m}\p\left( X_{1p}^1e^{-R_{\tau_{1p}^1}}\id_{\left \{ \tau_{1p}^1\le t \right \}}>\frac{\delta x}{2m},X_{1l}^2e^{-R_{\tau_{1l}^2}}\id_{\left \{ \tau_{1l}^2\le t \right \}}>\frac{\delta x}{2m}\right)\\
&=o(1)\sum_{j=1}^2\sum_{i=1}^{m}\p\left(X_{1i}^je^{-R_{\tau_{1i}^j}}\id_{\left \{ \tau_{1i}^j\le t \right \}}>x\right).
\end{align*}
Therefore, combined with Lemma \ref{lemma1}, we get (\ref{lemma5.4.1}), and then this lemma is proven.    
\end{proof}

{
\begin{lemma}\label{lemma4}
Under the 
{setting} of Theorem \ref{thm1}, for all $0<\varepsilon<1$, there is some large integer $m^{\prime}$ such that for all $m\ge m^{\prime}$, it holds uniformly for $t\in\Lambda$ that
\begin{align*}
\p\left ( \sum_{k=1}^{d}\sum_{j=1}^{r}\sum_{i=m+1}^{\infty}X_{ki}^je^{-R_{\tau_{ki}^j}}\id_{\left \{ \tau_{ki}^j\le t \right \}}> x \right ) \lesssim \varepsilon\sum_{k=1}^{d}\sum_{j=1}^{r}\overline{F_{k}^j}\left ( x \right )\int\limits_{0-}^{t}e^{s\phi \left ( \alpha \right ) }\mathrm{d}\lambda_{k,s}^{j}  .
\end{align*}    
\end{lemma}
}
\begin{proof}
According to the subadditivity of the probability measure, we can get
\begin{align*} 
\p\left ( \sum_{k=1}^{d}\sum_{j=1}^{r}\sum_{i=m+1}^{\infty}X_{ki}^je^{-R_{\tau_{ki}^j}}\id_{\left \{ \tau_{ki}^j\le t \right \}}> x \right ) \nonumber&\le \p\left ( \bigcup_{k=1}^{d}\bigcup_{j=1}^{r}\left \{ \sum_{i=m+1}^{\infty}X_{ki}^je^{-R_{\tau_{ki}^j}}\id_{\left \{ \tau_{ki}^j\le t \right \}} > \frac{x}{dr} \right \} \right )  \nonumber\\
&\le \sum_{k=1}^{d}\sum_{j=1}^{r}\p\left ( \sum_{i=m+1}^{\infty}X_{ki}^je^{-R_{\tau_{ki}^j}}\id_{\left \{ \tau_{ki}^j\le t \right \}} > \frac{x}{dr} \right).
\end{align*}
 Now imitating the proof of Lemma 3.8 in \cite{li2012asymptotics}, we can get that for all $1 \le k \le d$ and all $1\le j\le r$, there is some large integer $m^{\prime}$ such that for all $m\ge m^{\prime}$, we have that uniformly for all $t\in\Lambda$,
\begin{align*}
\p\left( \sum_{i=m+1}^{\infty}X_{ki}^je^{-R_{\tau_{ki}^j}}\id_{\left \{ \tau_{ki}^j\le t \right \}} > \frac{x}{dr} \right ) \lesssim \frac{\varepsilon}{dr} \overline{F_{k}^j}\left ( x \right )\int\limits_{0-}^{t}e^{s\phi \left ( \alpha \right ) }\mathrm{d}\lambda _{k,s}^{j}. 
\end{align*}
Thus, the required conclusion is proven.
\end{proof}

Now, we can state our proof of Theorem \ref{thm1}.\\
\textbf{Proof of Theorem \ref{thm1}.} ({\romannumeral1}) Notice that for all $m\in \mathbb{N}$ and all fixed $0<\beta<1$,
\begin{align*}
\p\left(S_{t} > x \right) &\le \p\left ( \sum_{k=1}^{d}\sum_{j=1}^{r}\sum_{i=1}^{m}X_{ki}^je^{-R_{\tau_{ki}^j}}\id_{\left \{ \tau_{ki}^j\le t \right \}}> (1-\beta) x \right )\\
&+ \p\left ( \sum_{k=1}^{d}\sum_{j=1}^{r}\sum_{i=m+1}^{\infty}X_{ki}^je^{-R_{\tau_{ki}^j}}\id_{\left \{ \tau_{ki}^j\le t \right \}}> \beta x \right )\\&=: K_{1}(x,t)+K_{2}(x,t).
\end{align*}
Now by Lemmas \ref{lemma3} and \ref{lemma4}, for all $\varepsilon>0$, all 
 large $m$, it holds uniformly for $t\in\Lambda$ that
\begin{align*}
K_{1}(x,t)\sim (1-\beta)^{-\alpha}\sum_{k=1}^{d}\sum_{j=1}^{r} \overline{F_{k}^j}\left ( x \right )\int\limits_{0-}^{t}e^{s\phi \left ( \alpha \right)}\sum_{i=1}^{m}\p\left(\tau_{ki}^j\in \mathrm{d}s\right)\le (1-\beta)^{-\alpha}\sum_{k=1}^{d}\sum_{j=1}^{r} \overline{F_{k}^j}\left ( x \right )\int\limits_{0-}^{t}e^{s\phi \left ( \alpha \right)}\mathrm{d}\lambda_{k,s}^j
\end{align*} 
and
\begin{align*}
K_{2}(x,t)\lesssim \beta^{-\alpha}\varepsilon \sum_{k=1}^{d}\sum_{j=1}^{r} \overline{F_{k}^j}\left ( x \right )\int\limits_{0-}^{t}e^{s\phi \left ( \alpha \right)}\mathrm{d}\lambda_{k,s}^j .
\end{align*}
Then it holds uniformly for $t\in\Lambda$ that $$K_{1}(x,t)+K_{2}(x,t)\lesssim \left((1-\beta)^{-\alpha}+\beta^{-\alpha}\varepsilon\right) \sum_{k=1}^{d}\sum_{j=1}^{r} \overline{F_{k}^j}\left ( x \right )\int\limits_{0-}^{t}e^{s\phi \left ( \alpha \right)}\mathrm{d}\lambda_{k,s}^j.$$ By the  arbitrariness of $\beta$ and $\varepsilon$, we get the upper-bound direction result of (\ref{thm1_1}). For the lower-bound direction, first notice that for all $1\le k\le d$, all $1\le j\le r$, all $m\in \mathbb{N}$, all $t\in\Lambda$ and $t<\infty$, we have
\begin{align}
\overline{F_{k}^j}\left( x \right)\int\limits_{0-}^{t}e^{s\phi \left ( \alpha \right)}\sum_{i=1}^{m}\p\left(\tau_{ki}^j\in \mathrm{d}s\right) &= \overline{F_{k}^j}(x)\int\limits_{0-}^{t}e^{s\phi \left ( \alpha \right)}\left(\mathrm{d}\lambda_{k,s}^{j}-\sum_{i=m+1}^{\infty}\p\left(\tau_{ki}^j\in \mathrm{d}s\right)\right) \nonumber\\
&\ge \overline{F_{k}^j}(x)\left(\int\limits_{0-}^{t}e^{s\phi \left ( \alpha \right)}\mathrm{d}\lambda_{k,s}^{j}-\E\left[N_{k,t}^{j}\id_{\left\{N_{k,t}^{j}\ge m+1\right\}}\right]\right).\label{3.5}
\end{align}
Since $m$ is arbitrary, according to Lemma 5.3 of \cite{tang2004uniform}, for an arbitrary $T\in\Lambda$, $\E\left[N_{k,t}^{j}\id_{\left\{N_{k,t}^{j}\ge m+1\right\}}\right]$ tends to 0 uniformly for $t\in\Lambda_{T}$ as $m\to\infty$. By Lemma \ref{lemma3}, the following relation
\begin{align*}
\p\left(S_{t}>x\right)&\ge \p\left ( \sum_{k=1}^{d}\sum_{j=1}^{r}\sum_{i=1}^{m}X_{ki}^je^{-R_{\tau_{ki}^j}}\id_{\left \{ \tau_{ki}^j\le t \right \}}> x \right )\sim \sum_{k=1}^{d}\sum_{j=1}^{r} \overline{F_{k}^j}\left ( x \right )\int\limits_{0-}^{t}e^{s\phi \left ( \alpha \right)}\sum_{i=1}^{m}\p\left(\tau_{ki}^j\in \mathrm{d}s\right)
\end{align*}
holds uniformly for all $t\in\Lambda$. Then by (\ref{3.5}), we get that uniformly for all $t\in\Lambda_{T}$, 
\begin{align}
\p\left(S_{t}>x\right) \gtrsim \sum_{k=1}^{d}\sum_{j=1}^{r} \overline{F_{k}^j}\left ( x \right )\int\limits_{0-}^{t}e^{s\phi \left ( \alpha \right)}\mathrm{d}\lambda_{k,s}^j.\label{p(St>x)>}    
\end{align} 
Next, by \eqref{integral<inf}, for all $1\le k\le d$ and $1\le j\le r$, we have $0<\int\limits_{0-}^{\infty}e^{s\phi \left ( \alpha \right)}\mathrm{d}\lambda_{k,s}^{j}<\infty$. Hence, for all $\varepsilon>0$, we can find some $T\in\Lambda$ such that $\int\limits_{T}^{\infty}e^{s\phi \left ( \alpha \right)}\mathrm{d}\lambda_{k,s}^{j}\le\varepsilon\int\limits_{0-}^{T}e^{s\phi \left ( \alpha \right)}\mathrm{d}\lambda_{k,s}^{j}$ for all $k,j$. Therefore, by \eqref{p(St>x)>}, we get that for any $t\in\Lambda^{T}$,
\begin{align*}
\p\left(S_{t}>x\right)\ge\p\left(S_{T}>x\right)&\gtrsim\sum_{k=1}^{d}\sum_{j=1}^{r} \overline{F_{k}^j}\left ( x \right )\int\limits_{0-}^{T}e^{s\phi \left ( \alpha \right)}\mathrm{d}\lambda_{k,s}^j\\
&\ge\frac{1}{1+\varepsilon}\sum_{k=1}^{d}\sum_{j=1}^{r} \overline{F_{k}^j}\left ( x \right )\int\limits_{0-}^{\infty}e^{s\phi \left ( \alpha \right)}\mathrm{d}\lambda_{k,s}^j\\
&\ge \frac{1}{1+\varepsilon}\sum_{k=1}^{d}\sum_{j=1}^{r} \overline{F_{k}^j}\left ( x \right )\int\limits_{0-}^{t}e^{s\phi \left ( \alpha \right)}\mathrm{d}\lambda_{k,s}^j.
\end{align*}
Letting $\varepsilon\downarrow 0$, we can get that (\ref{p(St>x)>}) holds uniformly for all $t\in\Lambda^{T}$, and thus uniformly for all $t\in\Lambda$.

({\romannumeral2}) Let $c:=\sum_{k=1}^{d}c_{k}$. Firstly, we consider the upper-bound direction of (\ref{thm1_2}). By ({\romannumeral1}), it holds uniformly for all $t\in\Lambda$ that 
\begin{align*}
\p(D_{t}>x)=\p\left(S_{t}-c\int\limits_{0}^{t}e^{-R_{u}}\mathrm{d}u > x\right)\le \p\left(S_{t}>x\right)\sim \sum_{k=1}^{d}\sum_{j=1}^{r} \overline{F_{k}^j}\left ( x \right )\int\limits_{0-}^{t}e^{s\phi \left ( \alpha \right)}\mathrm{d}\lambda_{k,s}^j.   
\end{align*}
Then consider the lower-bound direction of (\ref{thm1_2}). For any fixed $\delta>0$ and all $t\in\Lambda$, we have
\begin{align*}
    \p\left(S_{t}-c\int\limits_{0}^{t}e^{-R_{u}}\mathrm{d}u > x\right)&\ge \p\left(S_{t}-c\int\limits_{0}^{\infty}e^{-R_{u}}\mathrm{d}u > x\right)\nn\\
    &\ge \p\left(S_{t}>(1+\delta)x\right)-\p\left(c\int\limits_{0}^{\infty}e^{-R_{u}}\mathrm{d}u > \delta x\right).
\end{align*}
Due to ({\romannumeral1}), uniformly for all $t\in\Lambda$, we have $$\p\left(S_{t}>(1+\delta)x\right)\sim (1+\delta)^{-\alpha}\sum_{k=1}^{d}\sum_{j=1}^{r} \overline{F_{k}^j}\left ( x \right )\int\limits_{0-}^{t}e^{s\phi \left ( \alpha \right)}\mathrm{d}\lambda_{k,s}^j.$$ Now according to Lemma 4.6 of \cite{tang2010uniform}, we have $\E\left[\left(\int\limits_{0}^{\infty}e^{-R_{u}}\mathrm{d}u\right)^{\alpha^{\ast}}\right]<\infty$. So by Markov's inequality and (\ref{pre1}), we get that for all $T\in\Lambda$, it holds uniformly for all $t\in\Lambda^{T}$ that 
\begin{align*}
\p\left(c\int\limits_{0}^{\infty}e^{-R_{u}} \mathrm{d}u > \delta x\right)&\le \left(\frac{\delta x}{c}\right)^{-\alpha^{\ast}}\E\left[\left(\int\limits_{0}^{\infty}e^{-R_{u}} \mathrm{d}u\right)^{\alpha^{\ast}}\right]\\
&=o(1)\overline{F_{1}^1}\left ( x \right )\int\limits_{0-}^{t}e^{s\phi \left ( \alpha \right ) } \d\lambda _{1,s}^{1}\\
&=o(1)\sum_{k=1}^{d}\sum_{j=1}^{r} \overline{F_{k}^j}\left ( x \right )\int\limits_{0-}^{t}e^{s\phi \left ( \alpha \right)}\mathrm{d}\lambda_{k,s}^j,  
\end{align*}
where the second step is due to the fact that $$0<\int\limits_{0-}^{T}e^{s\phi \left ( \alpha \right ) } \d\lambda _{1,s}^{1}\le\int\limits_{0-}^{t}e^{s\phi \left ( \alpha \right ) } \d\lambda _{1,s}^{1}\le\int\limits_{0-}^{\infty}e^{s\phi \left ( \alpha \right ) } \d\lambda _{1,s}^{1}<\infty.$$ Then by the arbitrariness of $\delta$, we get that the lower-bound result holds uniformly for all $t\in\Lambda^{T}$. Therefore, Theorem \ref{thm1} has been proven.
 $\blacksquare$

\subsection{On Theorem 3.2}

According to Theorem \ref{thm1}(\romannumeral2), for all $T\in\Lambda$, it holds uniformly for all $t\in\Lambda^{T}$ that $\p(D_{t}>x)\sim\sum_{i=1}^{d}l_{i}(t)\overline{F}(x)$ as $x\rightarrow\infty$. Also notice that $\sum_{i=1}^{d}l_{i}(t)$ is increasing and continuous for $t\in [T,\infty)$, and $$\inf_{t\in [T,\infty)}\sum_{i=1}^{d}l_{i}(t)=\sum_{i=1}^{d}l_{i}(T)=:B>0,\, \sup_{t\in [T,\infty)}\sum_{i=1}^{d}l_{i}(t)=\sum_{i=1}^{d}l_{i}(\infty)<\infty,$$ so the range of $\sum_{i=1}^{d}l_{i}(t)$ on $[T,\infty]$ is actually a bounded closed interval of $\mathbb{R^{+}}$.

\begin{lemma}\label{lemma5}
Under the {setting} of Theorem \ref{thm2}, for any fixed $T\in\Lambda$, it holds uniformly for all $t\in\Lambda^{T}$ that
\begin{align}
F_{D_{t}}^{\gets }\left ( q \right ) \sim\left(\sum_{i=1}^{d}l_{i}(t)\right)^{\frac{1}{\alpha}} F^{\gets} \left ( q \right ).\label{uniform1}
\end{align}   
\end{lemma}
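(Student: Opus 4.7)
The plan is to invert the uniform tail asymptotic from Theorem \ref{thm1}(ii) and then apply the uniform convergence theorem for regularly varying functions to transfer the uniformity in $t$ from the tail side to the quantile side. Writing $C(t) := \sum_{i=1}^d l_i(t)$, Theorem \ref{thm1}(ii) together with $\overline{F_k}(x) \sim a_k \overline{F}(x)$ and $\overline{G_k}(x) \sim b_k \overline{F}(x)$ yields $\p(D_t > x) \sim C(t)\overline{F}(x)$ uniformly for $t \in \Lambda^T$. Given $\varepsilon \in (0,1)$, one can therefore pick $x_0$ such that $(1-\varepsilon) C(t)\overline{F}(x) \leq \p(D_t > x) \leq (1+\varepsilon) C(t)\overline{F}(x)$ simultaneously for all $x \geq x_0$ and all $t \in \Lambda^T$.

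From $F_{D_t}^{\gets}(q) = \inf\{x : \p(D_t > x) \leq 1-q\}$, the sandwich above inverts, for $q$ close enough to $1$ that all relevant infima exceed $x_0$, to
\begin{align*}
\overline{F}^{\gets}\!\left(\frac{1-q}{(1-\varepsilon) C(t)}\right) \leq F_{D_t}^{\gets}(q) \leq \overline{F}^{\gets}\!\left(\frac{1-q}{(1+\varepsilon) C(t)}\right).
\end{align*}
Set $U(s) := \overline{F}^{\gets}(1/s)$; since $\overline{F}\in\RV_{-\alpha}$, $U$ is regularly varying of index $1/\alpha$ at infinity. By Proposition 0.5 of \cite{resnick2008extreme}, $U(\lambda s)/U(s) \to \lambda^{1/\alpha}$ as $s \to \infty$ uniformly for $\lambda$ in any compact subset of $(0,\infty)$. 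Since $C(t)$ is continuous and increasing on $[T,\infty]$ with range contained in $[B, C(\infty)] \subset (0,\infty)$, the quantities $(1\pm\varepsilon) C(t)$ stay in a fixed compact subinterval of $(0,\infty)$ as $t$ varies over $\Lambda^T$. Applying the uniform convergence with $s = 1/(1-q)$ and $\lambda = (1\pm\varepsilon) C(t)$ gives
\begin{align*}
\overline{F}^{\gets}\!\left(\frac{1-q}{(1\pm\varepsilon) C(t)}\right) = U\!\left(\frac{(1\pm\varepsilon) C(t)}{1-q}\right) \sim \bigl((1\pm\varepsilon) C(t)\bigr)^{1/\alpha} F^{\gets}(q)
\end{align*}
uniformly for $t \in \Lambda^T$. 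Squeezing $F_{D_t}^{\gets}(q)$ and then letting $\varepsilon \downarrow 0$ yields \eqref{uniform1}.

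The main obstacle is precisely the uniformity. For each fixed $t$ the result is the standard quantile-tail inversion for regularly varying distributions; what is delicate is transporting uniformity over $t$ through the inversion, which is enabled by (i) the uniform tail estimate in Theorem \ref{thm1}(ii) and (ii) the uniform version of Karamata's representation (Proposition 0.5 of \cite{resnick2008extreme}). Both pieces require that $C(t)$ be confined to a compact subset of $(0,\infty)$, and it is precisely the restriction $t \geq T$ that supplies the crucial lower bound $C(t) \geq B > 0$ and prevents the degenerate regime $C(t) \downarrow 0$.
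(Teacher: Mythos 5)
Your proposal is correct and follows essentially the same route as the paper: invert the uniform two-sided tail estimate from Theorem \ref{thm1}(ii), observe that $C(t)=\sum_{i=1}^d l_i(t)$ stays in a compact subinterval of $(0,\infty)$ for $t\in\Lambda^T$ (which is exactly what the restriction $t\geq T$ supplies), and then apply the uniform convergence theorem for regularly varying functions (Proposition 0.5 of Resnick) to the inverse $\overline{F}^{\gets}$ before sending $\varepsilon\downarrow 0$. The only cosmetic difference is that the paper phrases the inverted sandwich directly via infima of the form $\inf\{y:(1\pm\varepsilon)C(t)\overline{F}(y)\leq 1-q\}$ rather than through the reparametrization $U(s)=\overline{F}^{\gets}(1/s)$, but the underlying argument is identical.
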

\begin{proof}
For a fixed $0<\varepsilon<1$, by Theorem \ref{thm1}, there is some $x^{\prime}>0$ such that for all $t\in\Lambda^{T}$ and all $x\ge x^{\prime}$, $$(1-\varepsilon)\sum_{i=1}^{d}l_{i}(t)\overline{F}(x)\le \p(D_{t}>x)\le (1+\varepsilon)\sum_{i=1}^{d}l_{i}(t)\overline{F}(x).$$ Take $q>1-(1-\varepsilon)B\overline{F}(x^{\prime})$. Then, $$1-q<(1-\varepsilon)B\overline{F}(x^{\prime})\le (1-\varepsilon)\sum_{i=1}^{d}l_{i}(t)\overline{F}(x^{\prime})\le \p(D_{t}>x^{\prime})\le (1+\varepsilon)\sum_{i=1}^{d}l_{i}(t)\overline{F}(x^{\prime}).$$ Hence, we have 
\begin{align*}
  x^{\prime}\le \inf\left\{y:(1-\varepsilon)\sum_{i=1}^{d}l_{i}(t)\overline{F}(y)\le 1-q\right\}&\le\inf\left\{y:\p(D_{t}>y)\le 1-q\right\}\\
  &\le \inf\left\{y:(1+\varepsilon)\sum_{i=1}^{d}l_{i}(t)\overline{F}(y)\le 1-q\right\}.  
\end{align*}
By Proposition 0.8(v) of \cite{resnick2008extreme}, $\left(\frac{1}{\overline{F}}\right)^{\gets}\in\RV_{\frac{1}{\alpha}}$, and by Proposition 0.5 of \cite{resnick2008extreme}, it naturally holds uniformly for all $t\in\Lambda^{T}$ that $$\inf\left\{y:(1+\varepsilon)\sum_{i=1}^{d}l_{i}(t)\overline{F}(y)\le 1-q\right\}\sim\left((1+\varepsilon)\sum_{i=1}^{d}l_{i}(t)\right)^{\frac{1}{\alpha}}F^{\gets} \left ( q \right ),$$ and the same holds for the side with $1-\varepsilon$. Letting $\varepsilon\downarrow 0$, we get the result.    
\end{proof}

\begin{lemma}\label{lemma6}
Under the {setting} of Theorem \ref{thm2}, for any fixed $T\in\Lambda$, it holds uniformly for all $t\in\Lambda^{T}$ that
\begin{align}
&\p\left ( Z_{k,t}> F_{D_{t}}^{\gets }\left ( q \right ) \right )\sim \frac{l_{k}(t)}{\sum_{i=1}^{d}l_{i}(t)}(1-q);\label{lemma5.7.1}\\
&\E\left [ \left ( Z_{k,t}-F_{D_{t}}^{\gets }\left ( q \right ) \right )^{+} \right ]\sim\frac{1-q}{\alpha-1} \frac{l_{k}(t)}{\left( \sum_{i=1}^{d}l_{i}(t) \right)^{1-\frac{1}{\alpha}}} F^{\gets }\left ( q \right );\label{lemma5.7.2}\\
&\E\left [ Z_{k,t}\id_{\left \{Z_{k,t}>F_{D_{t}}^{\gets }\left ( q \right ) \right \}} \right ]\sim\frac{\alpha \left(1-q \right)}{\alpha-1} \frac{l_{k}(t)}{\left( \sum_{i=1}^{d}l_{i}(t) \right)^{1-\frac{1}{\alpha}}} F^{\gets }\left ( q \right ).\label{lemma5.7.3}
\end{align}    
\end{lemma}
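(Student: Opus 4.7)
The plan is to reduce all three claims to two uniform ingredients already in hand. First, Theorem \ref{thm1}(\romannumeral2) specialized to one business line, combined with $\overline{F_{k}}\sim a_{k}\overline{F}$ and $\overline{G_{k}}\sim b_{k}\overline{F}$, gives
\begin{align*}
\p(Z_{t}^{k}>x)\sim l_{k}(t)\,\overline{F}(x)\quad\text{uniformly for }t\in\Lambda^{T},\ \text{as }x\uparrow\infty.
\end{align*}
Second, Lemma \ref{lemma5} supplies $F_{D_{t}}^{\gets}(q)\sim(\sum_{i}l_{i}(t))^{1/\alpha}F^{\gets}(q)$ uniformly in $t\in\Lambda^{T}$ as $q\uparrow 1$. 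The glue that turns these two asymptotics into uniform-in-$t$ statements is Proposition 0.5 of \cite{resnick2008extreme}, which upgrades the regular variation of $\overline{F}$ to uniform convergence on compact subintervals of $(0,\infty)$. This is usable here because, as observed just before Lemma \ref{lemma5}, $\sum_{i}l_{i}(t)$ takes values in a bounded closed subinterval of $(0,\infty)$ for $t\in\Lambda^{T}$, and the same holds for $l_{k}(t)$ and for the ratio $l_{k}(t)/\sum_{i}l_{i}(t)$.

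For (\ref{lemma5.7.1}), I would substitute $x=F_{D_{t}}^{\gets}(q)$ into the uniform one-line asymptotic above; since Lemma \ref{lemma5} implies $F_{D_{t}}^{\gets}(q)\to\infty$ uniformly in $t$, this step is legitimate and yields $\p(Z_{t}^{k}>F_{D_{t}}^{\gets}(q))\sim l_{k}(t)\,\overline{F}(F_{D_{t}}^{\gets}(q))$. Writing $F_{D_{t}}^{\gets}(q)/F^{\gets}(q)\to(\sum_{i}l_{i}(t))^{1/\alpha}$ uniformly, and applying the uniform regular variation of $\overline{F}$ on the compact range of this ratio, I obtain $\overline{F}(F_{D_{t}}^{\gets}(q))\sim(\sum_{i}l_{i}(t))^{-1}\overline{F}(F^{\gets}(q))\sim(\sum_{i}l_{i}(t))^{-1}(1-q)$. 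Chaining these two uniform limits produces (\ref{lemma5.7.1}).

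For (\ref{lemma5.7.2}), I would use the tail-integral identity $\E[(Z_{t}^{k}-F_{D_{t}}^{\gets}(q))^{+}]=\int_{F_{D_{t}}^{\gets}(q)}^{\infty}\p(Z_{t}^{k}>x)\,\mathrm{d}x$ and then invoke Karamata's theorem (here the hypothesis $\alpha>1$ is essential) to reduce the tail integral of $l_{k}(t)\overline{F}(x)$ to
\begin{align*}
\int_{F_{D_{t}}^{\gets}(q)}^{\infty}l_{k}(t)\overline{F}(x)\,\mathrm{d}x\sim\frac{l_{k}(t)}{\alpha-1}\,F_{D_{t}}^{\gets}(q)\,\overline{F}(F_{D_{t}}^{\gets}(q)).
\end{align*}
Substituting the asymptotics for $F_{D_{t}}^{\gets}(q)$ and $\overline{F}(F_{D_{t}}^{\gets}(q))$ derived in the previous paragraph delivers (\ref{lemma5.7.2}). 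For (\ref{lemma5.7.3}) I would apply the decomposition $Z_{t}^{k}\id_{\{Z_{t}^{k}>F_{D_{t}}^{\gets}(q)\}}=F_{D_{t}}^{\gets}(q)\id_{\{Z_{t}^{k}>F_{D_{t}}^{\gets}(q)\}}+(Z_{t}^{k}-F_{D_{t}}^{\gets}(q))^{+}$, take expectations, and combine (\ref{lemma5.7.1}) with (\ref{lemma5.7.2}); the two resulting terms have a common factor $(1-q)l_{k}(t)\bigl(\sum_{i}l_{i}(t)\bigr)^{1/\alpha-1}F^{\gets}(q)$, and the coefficients sum to $1+\frac{1}{\alpha-1}=\frac{\alpha}{\alpha-1}$.

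I expect the main obstacle to be preserving uniformity in $t$ when executing the Karamata step for (\ref{lemma5.7.2}) and when composing the two uniform limits in (\ref{lemma5.7.1}). The strategy is an $\varepsilon$-$\delta$ argument: fix $\varepsilon>0$, choose $x_{0}$ so that $|\p(Z_{t}^{k}>x)/(l_{k}(t)\overline{F}(x))-1|\le\varepsilon$ for all $x\ge x_{0}$ and all $t\in\Lambda^{T}$ (uniform part of Theorem \ref{thm1}), then use the uniform lower bound $F_{D_{t}}^{\gets}(q)\ge B^{1/\alpha}F^{\gets}(q)/2$ from Lemma \ref{lemma5} to ensure that for $q$ close to $1$ the integration range sits in $[x_{0},\infty)$. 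Within that range, Potter's bounds on $\overline{F}$ together with the compactness of the ranges of $l_{k}(t)$ and $\sum_{i}l_{i}(t)$ yield a $t$-uniform Karamata estimate, after which letting $\varepsilon\downarrow 0$ closes the argument.
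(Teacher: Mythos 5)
Your proposal is correct and follows essentially the same route as the paper's proof: uniform one‐line tail asymptotic $\p(Z_{t}^{k}>x)\sim l_{k}(t)\overline{F}(x)$, Lemma \ref{lemma5} for $F_{D_{t}}^{\gets}(q)$, Proposition 0.5 of Resnick to upgrade regular variation to uniform convergence on the compact range of $\sum_i l_i(t)$, Karamata's theorem for (\ref{lemma5.7.2}), and the same additive decomposition for (\ref{lemma5.7.3}), all glued by an $\varepsilon$--$\delta$ argument. One small caveat: the uniform relation $\p(Z_{t}^{k}>x)\sim l_{k}(t)\overline{F}(x)$ is not literally a specialization of Theorem \ref{thm1}({\romannumeral2}), which is stated only for the aggregate $D_t$; as the paper notes, it is obtained by repeating the proof of Theorem \ref{thm1} for a single business line (and indeed the same Lemmas \ref{lemma3}--\ref{lemma4} machinery, plus the premium‐term control, go through verbatim), so it should be phrased as an analogue rather than a direct corollary.
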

\begin{proof}
For \eqref{lemma5.7.1}, by Lemma \ref{lemma5}, fixing some $T\in\Lambda$, we have that for all $t\in\Lambda^{T}$, for any $0<\varepsilon<1$, for large enough $q$ independent of $t$, $$F_{D_{t}}^{\gets }\left ( q \right )\ge (1-\varepsilon)\left(\sum_{i=1}^{d}l_{i}(t)\right)^{\frac{1}{\alpha}} F^{\gets} \left ( q \right ),$$ which implies that for all $t\in\Lambda^{T}$,$$\p\left ( Z_{k,t}> F_{D_{t}}^{\gets }\left ( q \right ) \right )\le \p\left(Z_{k,t}>(1-\varepsilon)\left(\sum_{i=1}^{d}l_{i}(t)\right)^{\frac{1}{\alpha}} F^{\gets} \left ( q \right )\right).$$ Now following a similar way of proving Theorem \ref{thm1}, we can get that it holds uniformly for 
 all $t\in\Lambda^{T}$ that
\begin{align}
\p(Z_{k,t}>x)\sim l_{k}(t)\overline{F}(x). \label{uniform2}
\end{align}
That is, for all $\varepsilon^{\prime}>0$, there is some $x^{\prime\prime}>0$ such that for all $t\in\Lambda^{T}$ {and} all $x\ge x^{\prime\prime}$, $$\p(Z_{k,t}>x)\le (1+\varepsilon^{\prime})l_{k}(t)\overline{F}(x).$$ Take large enough $q$ such that $$x^{\prime\prime}\le (1-\varepsilon)B^{\frac{1}{\alpha}}F^{\gets} \left ( q \right )\le (1-\varepsilon)\left(\sum_{i=1}^{d}l_{i}(t)\right)^{\frac{1}{\alpha}}F^{\gets} \left ( q \right ).$$ Then, we have
\begin{align*}
\p\left(Z_{k,t}>(1-\varepsilon)\left(\sum_{i=1}^{d}l_{i}(t)\right)^{\frac{1}{\alpha}} F^{\gets} \left ( q \right )\right)&\le (1+\varepsilon^{\prime})l_{k}(t)\overline{F}\left((1-\varepsilon)\left(\sum_{i=1}^{d}l_{i}(t)\right)^{\frac{1}{\alpha}} F^{\gets} \left ( q \right )\right)\\
&\sim (1+\varepsilon^{\prime})(1-\varepsilon)^{-\alpha}\frac{l_{k}(t)}{\sum_{i=1}^{d}l_{i}(t)}\overline{F}(F^{\gets} \left ( q \right ))\\
&\sim (1+\varepsilon^{\prime})(1-\varepsilon)^{-\alpha}\frac{l_{k}(t)}{\sum_{i=1}^{d}l_{i}(t)}(1-q)
\end{align*}
 holds uniformly for all $t\in\Lambda^{T}$, where the first asymptotic relation comes from the regularly varying property of $F$ and Proposition 0.5 of \cite{resnick2008extreme}, and the last one comes from the fact $\overline{F}(F^{\gets} \left ( q \right ))\sim 1-q$. So letting $\varepsilon,\varepsilon^{\prime}\downarrow 0$, we get the upper-bound direction version of (\ref{lemma5.7.1}). The lower-bound direction is derived through a similar way.

To get (\ref{lemma5.7.2}), by uniform asymptotic relations (\ref{uniform1}) and (\ref{uniform2}), for all $0<\varepsilon,\varepsilon^{\prime}<1$ {and} all $t\in\Lambda^{T}$, there exists an appropriate $q^{\prime}$ such that for all $q>q^{\prime}$, we have
\begin{align}
\E\left [ \left ( Z_{k,t}-F_{D_{t}}^{\gets }\left ( q \right ) \right )^{+} \right ]&=\int\limits_{F_{D_{t}}^{\gets }\left ( q \right )}^{\infty}\p\left (Z_{k,t}>z \right )\mathrm{d}z\nonumber\\
&\le\int\limits_{(1-\varepsilon)\left(\sum_{i=1}^{d}l_{i}(t)\right)^{\frac{1}{\alpha}} F^{\gets} \left ( q \right )}^{\infty}\p\left (Z_{k,t}>z \right )\mathrm{d}z\nonumber\\
&\le (1+\varepsilon^{\prime})l_{k}(t)\int\limits_{(1-\varepsilon)\left(\sum_{i=1}^{d}l_{i}(t)\right)^{\frac{1}{\alpha}} F^{\gets} \left ( q \right )}^{\infty}\overline{F}(z)\mathrm{d}z.\label{lemma5.7pf1}
\end{align}
Moreover, Karamata's Theorem (see Theorem 0.6 in \cite{resnick2008extreme}) indicates that for all $\varepsilon^{\prime\prime}>0$, there is an $x^{\prime\prime\prime}>0$ such that for all $x\ge x^{\prime\prime\prime}$, $$\int\limits_{x}^{\infty}\overline{F}(z)\mathrm{d}z\le (1+\varepsilon^{\prime\prime})\frac{1}{\alpha-1}x\overline{F}(x).$$ Choose $q^{\prime\prime}>0$ such that $(1-\varepsilon)B^{\frac{1}{\alpha}} F^{\gets} \left ( q^{\prime\prime} \right )\ge x^{\prime\prime\prime}$. Then, for all $t\in\Lambda^{T}$ {and} all $q\ge \max\{q^{\prime},q^{\prime\prime}\}$, 
\begin{align}
\int\limits_{(1-\varepsilon)\left(\sum_{i=1}^{d}l_{i}(t)\right)^{\frac{1}{\alpha}} F^{\gets} \left ( q \right )}^{\infty}\overline{F}(z)\mathrm{d}z\le (1+\varepsilon^{\prime\prime})\frac{1-\varepsilon}{\alpha-1}\left(\sum_{i=1}^{d}l_{i}(t)\right)^{\frac{1}{\alpha}} F^{\gets} \left ( q \right )\overline{F}\left((1-\varepsilon)\left(\sum_{i=1}^{d}l_{i}(t)\right)^{\frac{1}{\alpha}} F^{\gets} \left ( q \right )\right).\label{lemma5.7pf2}
\end{align}
Also, by the regularly varying property of $F$ and Proposition 0.5 of \cite{resnick2008extreme}, it holds uniformly for all $t\in\Lambda^{T}$ that
\begin{align}
\overline{F}\left((1-\varepsilon)\left(\sum_{i=1}^{d}l_{i}(t)\right)^{\frac{1}{\alpha}} F^{\gets} \left ( q \right )\right)&\sim (1-\varepsilon)^{-\alpha}\left(\sum_{i=1}^{d}l_{i}(t)\right)^{-1}\overline{F}\left(F^{\gets} \left ( q \right )\right)\nonumber\\
&\sim (1-\varepsilon)^{-\alpha}\left(\sum_{i=1}^{d}l_{i}(t)\right)^{-1}(1-q).\label{lemma5.7pf3}    
\end{align}
Therefore, by (\ref{lemma5.7pf1}),  (\ref{lemma5.7pf2}) and (\ref{lemma5.7pf3}) and letting $\varepsilon, \varepsilon^{\prime}, \varepsilon^{\prime\prime}\downarrow 0$, we can get the upper-bound version of (\ref{lemma5.7.2}). The lower-bound version can be derived similarly.

To get (\ref{lemma5.7.3}), note that $$\E\left [ Z_{k,t}\id_{\left \{Z_{k,t}>F_{D_{t}}^{\gets }\left ( q \right ) \right \}} \right ]=F_{D_{t}}^{\gets }\left ( q \right ) \p\left ( Z_{k,t}>F_{D_{t}}^{\gets }\left ( q \right ) \right ) + \int\limits_{F_{D_{t}}^{\gets }\left ( q \right )}^{\infty}\p\left ( Z_{k,t}>z \right )\mathrm{d}z.$$ 
Thus, some asymptotic relations above can be appropriately applied to get the result.     
\end{proof}

\begin{lemma}\label{lemma7}
Under the {setting} of Theorem \ref{thm2}, for all $1\le k\le d$,  any fixed $\gamma\in (0,1),$ and any fixed $T\in\Lambda$, it holds uniformly for all $u\in [\gamma,1]$ and all $t\in\Lambda^{T}$ that
\begin{align}
\p\left(Z_{k,t}>uF_{D_{t}}^{\gets }(q),D_{t}>F_{D_{t}}^{\gets }(q)\right)\sim \p\left(Z_{k,t}>F_{D_{t}}^{\gets }(q)\right).\label{lemma5.8.1}
\end{align}    
\end{lemma}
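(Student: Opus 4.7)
My plan is to decompose the left-hand side according to whether $Z_t^k$ exceeds $F_{D_t}^{\gets}(q)$, and bound the pieces separately using the tail estimates from Theorem \ref{thm1} and Lemma \ref{lemma5} together with the PAI-based joint-claim bounds from the proof of Lemma \ref{lemma3}. Setting $W_t^k := D_t - Z_t^k = \sum_{i\ne k} Z_t^i$, I write the left-hand side as $P_1 + P_2$, where
\[
P_1 = \p(Z_t^k > F_{D_t}^{\gets}(q),\, D_t > F_{D_t}^{\gets}(q)), \quad P_2 = \p(uF_{D_t}^{\gets}(q) < Z_t^k \le F_{D_t}^{\gets}(q),\, D_t > F_{D_t}^{\gets}(q)).
\]
The goal reduces to proving (i) $P_1 \sim \p(Z_t^k > F_{D_t}^{\gets}(q))$ and (ii) $P_2 = o(\p(Z_t^k > F_{D_t}^{\gets}(q)))$, both uniformly for $t \in \Lambda^T$, and the latter also uniformly for $u \in [\gamma, 1]$.

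For (i) it suffices to show $\p(Z_t^k > F_{D_t}^{\gets}(q),\, D_t \le F_{D_t}^{\gets}(q)) = o(\p(Z_t^k > F_{D_t}^{\gets}(q)))$. On this event $W_t^k < 0$, and the a.s.\ lower bound $W_t^k \ge -K$ with $K := \sum_{i \ne k} c_i \int_0^\infty e^{-R_s}\,\d s$ having finite $\alpha^*$-th moment (by Lemma 4.6 of \cite{tang2010uniform}) lets me split by a small $\delta > 0$: on $\{Z_t^k > (1+\delta)F_{D_t}^{\gets}(q)\}$ one has $K \ge \delta F_{D_t}^{\gets}(q)$, so Markov's inequality combined with Lemma \ref{lemma5} and \eqref{pre1} gives a bound of order $(F^{\gets}(q))^{-\alpha^*} = o(1-q)$ uniformly in $t$; on the complementary thin ring $\{F_{D_t}^{\gets}(q) < Z_t^k \le (1+\delta)F_{D_t}^{\gets}(q)\}$ the uniform tail estimate \eqref{uniform2} and regular variation give an upper bound asymptotic to $(1-(1+\delta)^{-\alpha})\p(Z_t^k > F_{D_t}^{\gets}(q))$. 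Letting $\delta \downarrow 0$ at the very end closes this half.

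For (ii) I fix $\epsilon \in (0, 1-\gamma)$ and split $P_2$ at $(1-\epsilon)F_{D_t}^{\gets}(q)$: the thin ring $\{(1-\epsilon)F_{D_t}^{\gets}(q) < Z_t^k \le F_{D_t}^{\gets}(q)\}$ contributes at most $\sim ((1-\epsilon)^{-\alpha}-1)\p(Z_t^k > F_{D_t}^{\gets}(q))$ by \eqref{uniform2}, which vanishes as $\epsilon \downarrow 0$; on the complementary piece the constraint $D_t > F_{D_t}^{\gets}(q)$ forces $W_t^k > \epsilon F_{D_t}^{\gets}(q)$, so that piece is contained in the joint-tail event $\{Z_t^k > \gamma F_{D_t}^{\gets}(q),\, W_t^k > \epsilon F_{D_t}^{\gets}(q)\}$. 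The decisive technical step is then the joint-tail estimate
\[
\p(Z_t^k > \gamma F_{D_t}^{\gets}(q),\, W_t^k > \epsilon F_{D_t}^{\gets}(q)) = o(\p(Z_t^k > F_{D_t}^{\gets}(q))) \quad \text{uniformly in } t \in \Lambda^T,
\]
which I prove by expanding both $Z_t^k$ and $W_t^k$ as sums over their individual discounted claims and adapting the argument for the cross-terms $I_3$--$I_5$ in the proof of Lemma \ref{lemma3}: truncate each discount factor at a constant $\mathcal{M}$, apply PAI to pairs of claims drawn from distinct business lines (the analogue of \eqref{lemma5.4.4}, now between line $k$ and lines $i \ne k$), and handle the large-$\mathcal{M}$ residual via Markov's inequality with $\alpha^*$ using the $\Lambda_T$/$\Lambda^T$ dichotomy of \eqref{lemma5.4.3.1}--\eqref{lemma5.4.3.2}; the infinite-index tail $i \ge m+1$ is absorbed by Lemma \ref{lemma4}. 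The principal obstacle is keeping every one of these estimates uniform in $t \in \Lambda^T$ while the splitting parameters $\delta$ and $\epsilon$ are chosen independently of $t$ and sent to zero only at the end, which is precisely the type of bookkeeping that made Theorem \ref{thm1} delicate.
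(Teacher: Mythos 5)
Your decomposition is genuinely different from the paper's. The paper writes the target probability as $\p(D_t>F_{D_t}^{\gets}(q))-\p(Z_t^k\le uF_{D_t}^{\gets}(q),\,D_t>F_{D_t}^{\gets}(q))=:L_1-L_2$, establishes $L_1\sim 1-q\sim\sum_{p=1}^d\p(Z_t^p>F_{D_t}^{\gets}(q))$, and then lower-bounds $L_2$ via $\bigcup_{p\neq k}\{Z_t^p>F_{D_t}^{\gets}(q)\}$ and inclusion--exclusion, reducing everything to the single joint-tail estimate $\p(Z_t^p>x,Z_t^w>x)=o(\p(Z_t^p>x))$ which it proves by the finite-sum/infinite-tail split with PAI. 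You instead split $P_1+P_2$ directly, handle $P_1$ by bounding $\p(Z_t^k>F_{D_t}^{\gets}(q),\,D_t\le F_{D_t}^{\gets}(q))$ via the a.s.\ lower bound $W_t^k\ge -K$ with $\E[K^{\alpha^*}]<\infty$, and reduce $P_2$ to the joint tail of $Z_t^k$ and $W_t^k$. Both routes ultimately rest on the same two ingredients---the uniform regular-variation estimates \eqref{uniform1}--\eqref{uniform2} and a PAI joint-tail bound between distinct lines---but yours is arguably more careful at one point: the paper's lower bound asserts $\p(Z_t^k>F_{D_t}^{\gets}(q),\,D_t>F_{D_t}^{\gets}(q))=\p(Z_t^k>F_{D_t}^{\gets}(q))$ and $L_2\ge\p(Z_t^k\le uF_{D_t}^{\gets}(q),\bigcup_{p\neq k}\{Z_t^p>F_{D_t}^{\gets}(q)\})$ as exact set inclusions, which strictly fail when the premium terms make other $Z_t^p$ negative; your $\delta$-ring argument with the Markov bound on $K$ is exactly what is needed to make that step rigorous, and you do it explicitly. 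The tradeoff is that your ``decisive technical step'' needs the joint tail against $W_t^k$ rather than against a single $Z_t^p$, but since $\{W_t^k>\epsilon F_{D_t}^{\gets}(q)\}\subseteq\bigcup_{p\neq k}\{Z_t^p>\tfrac{\epsilon}{d-1}F_{D_t}^{\gets}(q)\}$ this reduces by a union bound to the same pairwise estimate the paper uses, so nothing new is required. The $u$-uniformity is trivially absorbed in your $P_2$ by replacing $u$ with $\gamma$, matching the paper; and the ``fix $\delta,\epsilon$ first, send $q\uparrow 1$ uniformly in $t$, then $\delta,\epsilon\downarrow 0$'' bookkeeping is handled correctly. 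This is a sound alternative proof.
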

\begin{proof}
 First notice that for all $u\in [\gamma,1]$ and all $t\in\Lambda^{T}$, 
\begin{align*}
\p\left(Z_{k,t}>uF_{D_{t}}^{\gets }(q),D_{t}>F_{D_{t}}^{\gets }(q)\right)\ge \p\left(Z_{k,t}>F_{D_{t}}^{\gets }(q),D_{t}>F_{D_{t}}^{\gets }(q)\right)=\p\left(Z_{k,t}>F_{D_{t}}^{\gets }(q)\right).
\end{align*}
On the other hand,
\begin{align}
\p\left(Z_{k,t}>uF_{D_{t}}^{\gets }(q),D_{t}>F_{D_{t}}^{\gets }(q)\right)&= \p\left(D_{t}>F_{D_{t}}^{\gets }(q)\right)-\p\left(Z_{k,t}\le uF_{D_{t}}^{\gets }(q),D_{t}>F_{D_{t}}^{\gets }(q)\right)\nonumber\\
&=:L_{1}(x,t)-L_{2}(x,t).\label{lemma5.8pf1}
\end{align}
Following the similar method of proving (\ref{lemma5.7.1}) in Lemma \ref{lemma6}, it is easy to see that uniformly for all $t\in\Lambda^{T}$,
\begin{align}
L_{1}(x,t)\sim 1-q\sim\sum_{p=1}^{d}\p\left(Z_{p,t}>F_{D_{t}}^{\gets }(q)\right).\label{lemma5.8pf2}
\end{align}
And as for $L_{2}(x,t)$, we have
\begin{align*}
L_{2}(x,t)&\ge \p\left(Z_{k,t}\le uF_{D_{t}}^{\gets }(q),\bigcup_{p=1,p\ne k}^{d}\{Z_{p,t}>F_{D_{t}}^{\gets }(q)\}\right)\\
&\ge\sum_{p=1,p\ne k}^{d}\p\left(Z_{k,t}\le uF_{D_{t}}^{\gets }(q),Z_{p,t}>F_{D_{t}}^{\gets }(q)\right)-\sum_{\substack{1\le p< w\le d,\\p\ne k,w\ne k}}\p\left(Z_{p,t}>F_{D_{t}}^{\gets }(q),Z_{w,t}>F_{D_{t}}^{\gets }(q)\right)\\
&=\sum_{p=1,p\ne k}^{d}\left(\p\left(Z_{p,t}>F_{D_{t}}^{\gets }(q)\right)-\p\left(Z_{k,t}> uF_{D_{t}}^{\gets }(q),Z_{p,t}>F_{D_{t}}^{\gets }(q)\right)\right)\\
&\quad-\sum_{\substack{1\le p< w\le d,\\p\ne k,w\ne k}}\p\left(Z_{p,t}>F_{D_{t}}^{\gets }(q),Z_{w,t}>F_{D_{t}}^{\gets }(q)\right).
\end{align*}  
So we only need to prove that, for all $1\le p\ne w\le d$, any fixed $\gamma\in (0,1)$ and all $T\in\Lambda$, it holds uniformly for all $u\in [\gamma,1]$ and $t\in\Lambda^{T}$ that $\p\left(Z_{p,t}>uF_{D_{t}}^{\gets }(q),Z_{w,t}>F_{D_{t}}^{\gets }(q)\right)=o(1)\p\left(Z_{p,t}>F_{D_{t}}^{\gets }(q)\right).$ We first consider the relation 
\begin{align}
\p\left(Z_{p,t}>x,Z_{w,t}>x\right)=o(1)\p\left(Z_{p,t}>x\right).\label{lemma5.8pf3}
\end{align}
For notational convenience, we set for all $m\in\mathbb{N}$ and all  $1\le k\le d$,
\begin{align*}
H_{k,t}:=\sum_{j=1}^{r}\sum_{i=1}^{m}X_{ki}^je^{-R_{\tau_{ki}^j}}\id_{\left \{ \tau_{ki}^j\le t \right \}} ,\:G_{k,t}:=\sum_{j=1}^{r}\sum_{i=m+1}^{\infty}X_{ki}^je^{-R_{\tau_{ki}^j}}\id_{\left \{ \tau_{ki}^j\le t \right \}}.  
\end{align*}
Notice that
\begin{align*}
&\quad\p\left(Z_{p,t}>x,Z_{w,t}>x\right)\\
&\le \p\left(\sum_{j=1}^{r}\sum_{i=1}^{\infty}X_{pi}^je^{-R_{\tau_{pi}^j}}1_{\left \{ \tau_{pi}^j\le t \right \}}>x,\,\sum_{j=1}^{r}\sum_{i=1}^{\infty}X_{wi}^je^{-R_{\tau_{wi}^j}}1_{\left \{ \tau_{wi}^j\le t \right \}}>x\right)\\
&\le \p\left(H_{p,t}>\frac{x}{2},H_{w,t}>\frac{x}{2}\right)+\p\left(G_{p,t}>\frac{x}{2},H_{w,t}>\frac{x}{2}\right)+\p\left(H_{p,t}>\frac{x}{2},G_{w,t}>\frac{x}{2}\right)\\
&\quad+\p\left(G_{p,t}>\frac{x}{2},G_{w,t}>\frac{x}{2}\right)\\
&=:L_{21}(x,t)+L_{22}(x,t)+L_{23}(x,t)+L_{24}(x,t).
\end{align*}
Now $L_{21}(x,t)$ can be ``split" into objects of the form $$\p\left( X_{ph}^ae^{-R_{\tau_{ph}^a}}\id_{\left \{ \tau_{ph}^a\le t \right \}}>\frac{x}{2rm}, X_{wl}^be^{-R_{\tau_{wl}^b}}\id_{\left \{ \tau_{wl}^b\le t \right \}}>\frac{x}{2rm} \right),$$ and then Lemma \ref{lemmabu} can be applied. $L_{22}(x,t)$, $L_{23}(x,t)$ and $L_{24}(x,t)$ can be handled through a similar method of proving Lemma \ref{lemma4}. Then, following a similar way in the proof of Theorem \ref{thm1}({\romannumeral1}), together with the fact that  $\p\left(Z_{p,t}>x\right)\asymp\p\left(Z_{w,t}>x\right)$ holds uniformly for all $t\in\Lambda^{T}$, we can get that (\ref{lemma5.8pf3}) holds uniformly for all $t\in\Lambda^{T}$.

Next, since $F_{D_{t}}^{\gets }\left ( q \right ) \sim\left(\sum_{i=1}^{d}l_{i}(t)\right)^{\frac{1}{\alpha}} F^{\gets} \left ( q \right )$ holds uniformly for all $t\in\Lambda^{T}$, applying a similar method as the proof of (\ref{lemma5.7.1}) in Lemma \ref{lemma6}, we can get for all $1\le p\ne w\le d$, uniformly for all $u\in [\gamma,1]$ and all $t\in\Lambda^{T}$, it holds that
\begin{align*}
\p\left(Z_{p,t}>uF_{D_{t}}^{\gets }(q),Z_{w,t}>F_{D_{t}}^{\gets }(q)\right)=o(1)\p\left(Z_{p,t}>uF_{D_{t}}^{\gets }(q)\right)=o(1)\p\left(Z_{p,t}>F_{D_{t}}^{\gets }(q)\right).
\end{align*}
Therefore, uniformly for any $u\in [\gamma,1]$ and $t\in\Lambda^{T}$, we have
\begin{align}
L_{2}(x,t)\gtrsim\sum_{p=1,p\ne k}^{d}\p\left(Z_{p,t}>F_{D_{t}}^{\gets }(q)\right).\label{lemma5.8pf4}
\end{align}
Combining (\ref{lemma5.8pf2}) and (\ref{lemma5.8pf4}), we get the upper-bound version of (\ref{lemma5.8.1}). Hence the result has been proven.   
\end{proof}

\begin{lemma}\label{lemma8}
Under the {setting} of Theorem \ref{thm2}, for any $1\le k\le d$ {and} any fixed $T\in\Lambda$, it holds uniformly for all $t\in\Lambda^{T}$ that
\begin{align*}
\E\left [ Z_{k,t}\id_{\left \{D_{t}>F_{D_{t}}^{\gets }\left ( q \right ) \right \}} \right ]\sim \E\left [ Z_{k,t}\id_{\left \{Z_{k,t}>F_{D_{t}}^{\gets }\left ( q \right ) \right \}} \right ].
\end{align*}    
\end{lemma}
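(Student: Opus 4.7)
The plan is to set $V := F_{D_t}^{\gets}(q)$ and show that $\E[Z_t^k \id_{\{D_t > V\}}]$ matches $\E[Z_t^k \id_{\{Z_t^k > V\}}]$ asymptotically, uniformly for $t \in \Lambda^T$, by comparing both to the value $\frac{\alpha(1-q)}{\alpha-1} \frac{l_k(t)}{(\sum_i l_i(t))^{1-1/\alpha}} F^{\gets}(q)$ already identified for the latter in Lemma \ref{lemma6}. First I would dispose of the negative part: since $(Z_t^k)^- \le c_k \int_0^\infty e^{-R_s}\,\mathrm{d}s =: c_k P_\infty$ and $\E[P_\infty^{\alpha^*}] < \infty$ by Lemma 4.6 of \cite{tang2010uniform}, H\"older's inequality combined with $\p(D_t > V) = 1-q$ yields $\E[(Z_t^k)^- \id_{\{D_t > V\}}] = O((1-q)^{1-1/\alpha^*})$. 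Since $F^{\gets} \in \RV_{1/\alpha}$ in $1-q$ and $\alpha^* > \alpha$, this is $o((1-q) F^{\gets}(q))$ and hence negligible relative to the target; moreover $(Z_t^k)^- \id_{\{Z_t^k > V\}} = 0$ once $V > 0$.

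For the positive part, I would apply the layer-cake identity to write
\[
\E[(Z_t^k)^+ \id_{\{D_t > V\}}] = \int_0^V \p(Z_t^k > z, D_t > V)\,\mathrm{d}z + \int_V^\infty \p(Z_t^k > z, D_t > V)\,\mathrm{d}z,
\]
and split the first integral at $\gamma V$ for small $\gamma \in (0,1)$. On $[\gamma V, V]$, Lemma \ref{lemma7} applied with $u = z/V \in [\gamma, 1]$ gives $\p(Z_t^k > z, D_t > V) \sim \p(Z_t^k > V)$ uniformly in $u$ and in $t \in \Lambda^T$, so the contribution is $\sim (1-\gamma) V \p(Z_t^k > V)$. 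On $[0, \gamma V]$, the crude bound $\p(Z_t^k > z, D_t > V) \le 1-q$ contributes at most $\gamma V(1-q)$; since $V \p(Z_t^k > V) \sim \frac{l_k(t)}{\sum_i l_i(t)} V(1-q)$ and the ratio $\sum_i l_i(t)/l_k(t)$ is uniformly bounded on $\Lambda^T$ (as $l_k(t) \in [l_k(T), l_k(\infty)]$), this piece is a uniformly bounded $\gamma$-multiple of the target and vanishes as $\gamma \downarrow 0$. For the second integral, the trivial upper bound $\p(Z_t^k > z, D_t > V) \le \p(Z_t^k > z)$ together with Karamata's theorem and the uniform asymptotic $\p(Z_t^k > z) \sim l_k(t) \overline{F}(z)$ (upgraded to uniform regular variation via Proposition 0.5 of \cite{resnick2008extreme}) gives the upper bound $V \p(Z_t^k > V)/(\alpha-1)$.

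The main obstacle is producing a matching lower bound on the second integral, that is, showing $\int_V^\infty \p(Z_t^k > z, D_t \le V)\,\mathrm{d}z = o((1-q) F^{\gets}(q))$ uniformly in $t \in \Lambda^T$. Here the key pathwise observation is that on the event $\{Z_t^k > V\} \cap \{D_t \le V\}$ we have $D_t^{-k} := D_t - Z_t^k \le V - Z_t^k$, so $(Z_t^k - V)^+ \le (-D_t^{-k})^+ \le \sum_{j \ne k} c_j P_\infty$. A second appeal to the layer-cake identity rewrites the integral as $\E[(Z_t^k - V)^+ \id_{\{D_t \le V\}}]$, which by the pathwise bound is dominated by $\E[(\sum_{j \ne k} c_j P_\infty) \id_{\{Z_t^k > V\}}]$; a final H\"older inequality, using $\E[P_\infty^{\alpha^*}] < \infty$ and $\p(Z_t^k > V) = O(1-q)$, delivers the same $O((1-q)^{1-1/\alpha^*})$ estimate, which is $o((1-q) F^{\gets}(q))$ as before. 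Combining the upper and lower bounds and letting $\gamma \downarrow 0$ yields $\E[(Z_t^k)^+ \id_{\{D_t > V\}}] \sim \frac{\alpha}{\alpha-1} V \p(Z_t^k > V)$, which matches $\E[Z_t^k \id_{\{Z_t^k > V\}}]$ through Lemma \ref{lemma6}. Uniformity in $t \in \Lambda^T$ is preserved throughout because all asymptotic inputs (Lemmas \ref{lemma5}--\ref{lemma7}) and the regular-variation estimates have already been established uniformly there.
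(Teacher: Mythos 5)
Your proposal is correct, and it diverges from the paper's argument in a substantive way. The paper opens by asserting the pointwise inequality $\E[Z_t^k\id_{\{D_t>V\}}]\ge\E[Z_t^k\id_{\{Z_t^k>V\}}]$ (which is not self-evident since $Z_t^k$ can be negative) and then proves the matching upper bound by splitting $\E[Z_t^k\id_{\{D_t>V\}}]$ according to whether $Z_t^k>\beta V$: the large-$Z_t^k$ piece is treated via Lemma \ref{lemma7}, and the small-$Z_t^k$ piece is controlled by observing that $\{Z_t^k\le\beta V,\,D_t>V\}$ forces some other $Z_t^p$ to exceed a fixed fraction of $V$ and then invoking a pairwise-asymptotic-independence estimate of type (\ref{lemma5.9pf5}). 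You instead compute $\E[Z_t^k\id_{\{D_t>V\}}]$ directly through the layer-cake identity, match each piece against the explicit asymptote $\frac{\alpha}{\alpha-1}V\p(Z_t^k>V)$ supplied by Lemma \ref{lemma6}, and --- most distinctively --- exploit the pathwise bound $(Z_t^p)^-\le c_p\int_0^\infty e^{-R_s}\,\mathrm{d}s=:c_pP_\infty$ with $\E[P_\infty^{\alpha^*}]<\infty$. That single observation eliminates the negative part of $Z_t^k$ via H\"older, and, more importantly, dominates the lower-bound remainder $\int_V^\infty\p(Z_t^k>z,D_t\le V)\,\mathrm{d}z=\E[(Z_t^k-V)^+\id_{\{D_t\le V\}}]$, since on $\{Z_t^k>V,D_t\le V\}$ the overshoot is pathwise at most $\sum_{p\ne k}c_pP_\infty$. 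The paper's route buys model-agnosticity --- it relies only on the PAI structure, not on the premium/discount-factor bound, so it would survive in a model without such a bound --- whereas your route buys a shorter, more self-contained argument that sidesteps the joint-tail comparisons between business lines entirely and avoids having to justify the opening inequality. Uniformity over $\Lambda^T$ is preserved in your version for the same reasons as in the paper's: $l_k(t)$ and $\sum_i l_i(t)$ are bounded away from zero and infinity on $\Lambda^T$, and the asymptotic inputs (Lemmas \ref{lemma5}--\ref{lemma7}, Karamata, Potter, and the $o((1-q)^{1-1/\alpha^*})=o((1-q)F^{\gets}(q))$ comparison coming from $\alpha^*>\alpha$) are all available uniformly there.
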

\begin{proof}
First notice that $$\E\left [ Z_{k,t}\id_{\left \{D_{t}>F_{D_{t}}^{\gets }\left ( q \right ) \right \}} \right ]\ge \E\left [ Z_{k,t}\id_{\left \{Z_{k,t}>F_{D_{t}}^{\gets }\left ( q \right ) \right \}} \right ].$$ On the other hand, for any fixed $0<\beta<1$, 
\begin{align*}
\E\left [ Z_{k,t}\id_{\left \{D_{t}>F_{D_{t}}^{\gets }\left ( q \right ) \right \}} \right ]&= \E\left [ Z_{k,t}\id_{\left \{Z_{k,t}>\beta F_{D_{t}}^{\gets }\left ( q \right ),D_{t}>F_{D_{t}}^{\gets }\left ( q \right ) \right \}} \right ]+\E\left [ Z_{k,t}\id_{\left \{Z_{k,t}\le\beta F_{D_{t}}^{\gets }\left ( q \right ),D_{t}>F_{D_{t}}^{\gets }\left ( q \right ) \right \}} \right ]\\
&=:V_{1}(x,t)+V_{2}(x,t).
\end{align*}
Notice that
\begin{align*}
V_{1}(x,t)&=\beta F_{D_{t}}^{\gets }(q) \p\left(Z_{k,t}>\beta F_{D_{t}}^{\gets }(q),D_{t}>F_{D_{t}}^{\gets }(q)\right)+\int\limits_{\beta F_{D_{t}}^{\gets }\left ( q \right )}^{\infty}\p\left ( Z_{k,t}>z,D_{t}>F_{D_{t}}^{\gets }(q) \right )\mathrm{d}z\\
&=:V_{11}(x,t)+V_{12}(x,t).
\end{align*}
Applying Lemma \ref{lemma7}, we get that uniformly for all $t\in\Lambda^{T}$,
\begin{align}
V_{11}(x,t)&\sim\beta F_{D_{t}}^{\gets }(q) \p\left(Z_{k,t}>F_{D_{t}}^{\gets }(q)\right),\label{lemma5.9pf2}\\
V_{12}(x,t)&=\left(\int\limits_{\beta F_{D_{t}}^{\gets }\left ( q \right )}^{F_{D_{t}}^{\gets }\left ( q \right )}+\int\limits_{F_{D_{t}}^{\gets }\left ( q \right )}^{\infty}\right)\p\left ( Z_{k,t}>z,D_{t}>F_{D_{t}}^{\gets }(q) \right )\mathrm{d}z\nonumber\\
&=F_{D_{t}}^{\gets }\left ( q \right )\int\limits_{\beta}^{1}\p\left ( Z_{k,t}>uF_{D_{t}}^{\gets }\left ( q \right ),D_{t}>F_{D_{t}}^{\gets }(q) \right )\mathrm{d}u+\int\limits_{F_{D_{t}}^{\gets }\left ( q \right )}^{\infty}\p\left ( Z_{k,t}>z\right )\mathrm{d}z\nonumber\\
&\sim (1-\beta)F_{D_{t}}^{\gets }\left ( q \right )\p\left ( Z_{k,t}>F_{D_{t}}^{\gets }\left ( q \right )\right )+\int\limits_{F_{D_{t}}^{\gets }\left ( q \right )}^{\infty}\p\left ( Z_{k,t}>z\right )\mathrm{d}z.\label{lemma5.9pf3}
\end{align}
Combining (\ref{lemma5.9pf2}) and (\ref{lemma5.9pf3}), we get that uniformly for all $t\in\Lambda^{T}$,
\begin{align}
V_{1}(x,t)\sim F_{D_{t}}^{\gets }\left ( q \right )\p\left ( Z_{k,t}>F_{D_{t}}^{\gets }\left ( q \right )\right )+\int\limits_{F_{D_{t}}^{\gets }\left ( q \right )}^{\infty}\p\left ( Z_{k,t}>z\right )\mathrm{d}z=\E\left [ Z_{k,t}\id_{\left \{Z_{k,t}>F_{D_{t}}^{\gets }\left ( q \right ) \right \}} \right ].\label{lemma5.9pf4}
\end{align}
As for $V_{2}(x,t)$, notice that uniformly for all $t\in\Lambda^{T}$,
\begin{align}
V_{2}(x,t)&\le\sum_{p=1,p\ne k}^{d}\E\left [ Z_{k,t}\id_{\left \{Z_{k,t}\le\beta F_{D_{t}}^{\gets }\left ( q \right ),\,Z_{p,t}>\frac{1-\beta}{{d}-1}F_{D_{t}}^{\gets }\left ( q \right ) \right \}} \right ]\nonumber\\
&=\sum_{p=1,p\ne k}^{d}\int\limits_{0}^{\beta F_{D_{t}}^{\gets }\left ( q \right )}z\p\left ( Z_{k,t}\in \mathrm{d}z,\,Z_{p,t}>\frac{1-\beta}{{d}-1}F_{D_{t}}^{\gets }\left ( q \right )\right )\nonumber\\
&\le \sum_{p=1,p\ne k}^{d}\beta F_{D_{t}}^{\gets }\left ( q \right )\p\left (Z_{p,t}>\frac{1-\beta}{{d}-1}F_{D_{t}}^{\gets }\left ( q \right )\right )\nonumber\\
&\sim \beta\left(\frac{1-\beta}{{d}-1}\right)^{-\alpha}\sum_{p=1,p\ne k}^{d}\frac{l_{p}(t)}{l_{k}(t)}F_{D_{t}}^{\gets }\left ( q \right )\p\left (Z_{k,t}>F_{D_{t}}^{\gets }\left ( q \right )\right )\label{lemma5.9pf5}\\
&\le \beta\left(\frac{1-\beta}{{d}-1}\right)^{-\alpha}\sum_{p=1,p\ne k}^{d}\frac{l_{p}(t)}{l_{k}(t)}\E\left [ Z_{k,t}\id_{\left \{Z_{k,t}>F_{D_{t}}^{\gets }\left ( q \right ) \right \}} \right ],\nonumber
\end{align}
where the asymptotic relation \eqref{lemma5.9pf5} can be derived through a similar way as the proof of \eqref{lemma5.7.1} in Lemma \ref{lemma6}. Now due to the arbitrariness of $\beta$, we can make $\beta\left(\frac{1-\beta}{{d}-1}\right)^{-\alpha}$ as small as we desire. Therefore, uniformly for all $t\in\Lambda^{T}$, we have
\begin{align}
V_{2}(x,t)=o(1)\E\left [ Z_{k,t}\id_{\left \{Z_{k,t}>F_{D_{t}}^{\gets }\left ( q \right ) \right \}} \right ].\label{lemma5.9pf6} 
\end{align}
Combining \eqref{lemma5.9pf4} and \eqref{lemma5.9pf6}, we get that uniformly for all $t\in\Lambda^{T}$, $$\E\left [ Z_{k,t}\id_{\left \{D_{t}>F_{D_{t}}^{\gets }\left ( q \right ) \right \}} \right ]\lesssim \E\left [ Z_{k,t}\id_{\left \{Z_{k,t}>F_{D_{t}}^{\gets }\left ( q \right ) \right \}} \right ].$$ Then we get the result.    
\end{proof}

Now, we can turn to the proof of Theorem \ref{thm2}.\\
\textbf{Proof of Theorem \ref{thm2}.} First notice that 
\begin{align*}
\SES_{q,k}(D_{t})&=\int\limits_{F_{Z_{k,t}}^{\gets }(q)}^{\infty}\p\left(Z_{k,t}>z\mid D_{t}>F_{D_{t}}^{\gets }(q)\right)\mathrm{d}z\\
&=\frac{1}{\p\left(D_{t}>F_{D_{t}}^{\gets }(q)\right)}\int\limits_{F_{Z_{k,t}}^{\gets }(q)}^{\infty}\p(Z_{k,t}>z, D_{t}>F_{D_{t}}^{\gets }(q))\mathrm{d}z\\
&=\frac{F_{D_{t}}^{\gets }(q)}{\p\left(D_{t}>F_{D_{t}}^{\gets }(q)\right)}\left(\int\limits_{\frac{F_{Z_{k,t}}^{\gets }(q)}{F_{D_{t}}^{\gets }(q)}}^{1}+\int\limits_{1}^{\infty}\right)\p\left(Z_{k,t}>uF_{D_{t}}^{\gets }(q), D_{t}>F_{D_{t}}^{\gets }(q)\right)\mathrm{d}u\\
&=:\frac{F_{D_{t}}^{\gets }(q)}{\p\left(D_{t}>F_{D_{t}}^{\gets }(q)\right)}(A_{1}(x,t)+A_{2}(x,t)).
\end{align*}
It is easy to see 
\begin{align}
A_{2}(x,t)=\int\limits_{1}^{\infty}\p\left(Z_{k,t}>uF_{D_{t}}^{\gets }(q)\right)\mathrm{d}u=\frac{1}{F_{D_{t}}^{\gets }(q)}\E\left [ \left ( Z_{k,t}-F_{D_{t}}^{\gets }\left ( q \right ) \right )^{+} \right ].\label{thm2pf1}
\end{align}
As for $A_{1}(x,t)$, for all $t\in\Lambda^{T}$, $\lim_{q\uparrow 1}\frac{F_{Z_{k,t}}^{\gets }(q)}{F_{D_{t}}^{\gets }(q)}=\left(\frac{l_{k}(t)}{\sum_{i=1}^{d}l_{i}(t)}\right)^{\frac{1}{\alpha}}<1$ (the limit of $F_{Z_{k,t}}^{\gets }(q)$ can be derived through a similar way as the proof of Lemma \ref{lemma5}). By Lemma \ref{lemma7}, we can know that uniformly for all $t\in\Lambda^{T}$, 
\begin{align}
A_{1}(x,t)\sim \left(1-\frac{F_{Z_{k,t}}^{\gets }(q)}{F_{D_{t}}^{\gets }(q)}\right)\p\left(Z_{k,t}>F_{D_{t}}^{\gets }(q)\right).\label{thm2pf2}
\end{align}
Combining \eqref{thm2pf1}, \eqref{thm2pf2} and Lemma \ref{lemma8}, we get that uniformly for all $t\in\Lambda^{T}$,
\begin{align*}
&\SES_{q,k}(D_{t}) \sim \frac{\p\left ( Z_{k,t}> F_{D_{t}}^{\gets }\left ( q \right ) \right ) \left(F_{D_{t}}^{\gets }(q)-F_{Z_{k,t}}^{\gets }(q)\right)
+\E\left [ \left ( Z_{k,t}-F_{D_{t}}^{\gets }\left ( q \right ) \right )^{+}  \right ] }{\p\left (D_{t}>F_{D_{t}}^{\gets }\left ( q \right ) \right ) }, \\
&\MES_{q,k}(D_{t})=\frac{\E\left [ Z_{k,t}\id_{\left \{D_{t}>F_{D_{t}}^{\gets }\left ( q \right ) \right \}} \right ]}{\p\left ( D_{t}>F_{D_{t}}^{\gets }\left ( q \right ) \right )} \sim \frac{\E\left [ Z_{k,t}\id_{\left \{Z_{k,t}>F_{D_{t}}^{\gets }\left ( q \right ) \right \}} \right ]}{\p\left ( D_{t}>F_{D_{t}}^{\gets }\left ( q \right ) \right )}, 
\end{align*}
Now applying \eqref{uniform1}, \eqref{lemma5.7.1}, \eqref{lemma5.7.2}, \eqref{lemma5.7.3} and \eqref{lemma5.8pf2}, we can get that uniformly for all $t\in\Lambda^{T}$,
\begin{align*}
&\SES_{q,k}(D_{t}) \sim \frac{l_{k}(t)}{\sum_{i=1}^{d}l_{i}(t)} \left ( \left(\sum_{i=1}^{d}l_{i}(t)\right)^{\frac{1}{\alpha}}-\left ( l_{k}(t) \right )^{\frac{1}{\alpha}} + \frac{\left ( \sum_{i=1}^{d}\left (l_{i}(t)  \right ) \right )^{\frac{1}{\alpha} } }{\alpha-1}\right ) F^{\gets }\left ( q \right),\\
&\MES_{q,k}(D_{t}) \sim \frac{\alpha }{\alpha-1} \frac{l_{k}(t)}{\left( \sum_{i=1}^{d}l_{i}(t) \right )^{1-\frac{1}{\alpha}}} F^{\gets }\left ( q \right ). 
\end{align*} 
Thus, the proof is completed. $\blacksquare$

\subsection{On Proposition 4.1}
By simple calculations, we get 
\begin{align*}
p^A(x,t)=&\sum_{k=1}^d\sum_{j=1}^{r}\overline{F_{k}^j}(x)\lambda_k^j\int\limits_0^te^{s\phi_R(\alpha)}\mathrm{d}s;\\
\mathrm{SES}^A(q,k,t)(D_{t})=&\frac{\sum_{j=1}^{r}a_k^j\lambda_k^j}{\sum_{i=1}^d\sum_{j=1}^{r}a_i^j\lambda_i^j}\left[\frac{\alpha}{\alpha-1}\left(\sum_{i=1}^d\sum_{j=1}^{r}a_i^j\lambda_i^j\right)^{\frac{1}{\alpha}}-\left(\sum_{j=1}^{r}a_k^j\lambda_k^j\right)^{\frac{1}{\alpha}}\right]\\
&F^{\gets}(q)\left(\int\limits_0^te^{s\phi_R(\alpha)}\mathrm{d}s\right)^{\frac{1}{\alpha}};\\
\mathrm{MES}^A(q,k,t)(D_{t})=&\frac{\sum_{j=1}^{r}a_k^j\lambda_k^j}{\sum_{i=1}^d\sum_{j=1}^{r}a_i^j\lambda_i^j}\frac{\alpha}{\alpha-1}F^{\gets}(q)\left(\int\limits_0^te^{s\phi_R(\alpha)}\mathrm{d}s\right)^{\frac{1}{\alpha}}.
\end{align*}
Then, notice that in the three equations above, all components except $\int\limits_0^te^{s\phi_R(\alpha)}\mathrm{d}s$ are positive constants. Thus, we only need to consider the impact of $\phi_R(\alpha)$. From (\ref{phi_Rt}), it follows that the function $\phi_R(\alpha)$ decreases with $\mu$ and increases with $\sigma$ and $\lambda$. Then the proposition is proven. \\

\noindent
{\bf Acknowledgements.}
{The authors are grateful to the editors and two anonymous referees for valuable comments that have greatly improved this paper.}
B. Geng acknowledges financial support from the research startup fund (Grant No. S020318033/015)
at Anhui University and the Provincial Natural Science Research Project of Anhui Colleges (Grant No. 2024AH050037). Y. Liu acknowledges financial support from the National Natural Science Foundation of China (Grant No. 12401624), The Chinese University of Hong Kong, Shenzhen research startup fund (Grant No. UDF01003336) and Shenzhen Science and Technology Program (Grant Nos. RCBS20231211090814028, JCYJ20250604141203005, 2023TC0177), and is partly supported by the Guangdong Provincial Key Laboratory of Mathematical Foundations for Artificial Intelligence (Grant No. 2023B1212010001).  
The authors are grateful to members of the research group on financial mathematics and risk management at The Chinese University of Hong Kong, Shenzhen for their useful feedback and conversations.


\small
\bibliographystyle{apalike}
\bibliography{arxiv}

\end{document}